\newif\ifdraft \drafttrue
\pgfplotsset{compat=1.10}
\def\cdf(#1)(#2)(#3){0.5*(1+(erf((#1-#2)/(#3*sqrt(2)))))}%
\pgfplotsset{compat=1.14}
\tikzset{
	declare function={
		normcdf(\x,\m,\s)=1/(1 + exp(-0.07056*((-\x-\m)/\s)^3 - 1.5976*(-\x-\m)/\s));
	}
}
\newcommand{\ind}{\mathbbm{1}}
\newcommand{\cj}[1]{\ifdraft \textcolor{red}{[Chris: #1]}\fi}
\newcommand{\ar}[1]{\ifdraft \textcolor{red}{[Aaron: #1]}\fi}
\newcommand{\cl}[1]{\ifdraft \textcolor{red}{[Changhwa: #1]}\fi}
\newcommand{\groups}{\mathcal{G}} 
\newcommand{\group}{g} 
\newcommand{\agent}{a} 
\newcommand{\crime}{c} 
\newcommand{\guilty}{q} 
\newcommand{\crimeReward}{\rho} 
\newcommand{\crimeCost}{\kappa} 
\newcommand{\TPR}{\mathrm{TPR}} 
\newcommand{\FPR}{\mathrm{FPR}} 
\newcommand{\FNR}{\mathrm{FNR}} 
\newcommand{\PPV}{\mathrm{PPV}} 
\newcommand{\outOptVal}{\omega} 
\newcommand{\outOptCdf}{H} 
\newcommand{\outOptPdf}{h} 
\newcommand{\sigVal}{s} 
\newcommand{\fairnessNotion}{\xi} 
\newcommand{\sigSet}{\mathbb{R}} 
\newcommand{\crimeSigCDF}{F^{\text{c}}} 
\newcommand{\noncrimeSigCDF}{F^{\text{i}}} 
\newcommand{\crimeSigPDF}{f^{\text{c}}} 
\newcommand{\noncrimeSigPDF}{f^{\text{i} }} 
\newcommand{\probPolicy}{\beta} 
\newcommand{\margCostCrime}{\Delta} 
\newcommand{\numPeople}{N} 
\newcommand{\sigThresh}{T} 
\newcommand{\inspecInsty}{\theta} 
\newcommand{\searchCap}{S} 
\newcommand{\CFPR}{CFPR} 
\newcommand{\CTPR}{CTPR} 
\newcommand{\sigStruct}{F} 
\newcommand{\Beta}{B} 
\newcommand{\crimeRate}{\text{CR}} 
\newtheorem{theorem}{Theorem}[section]
\newtheorem{remark}[theorem]{Remark}
\newtheorem{example}[theorem]{Example}
\newtheorem{definition}[theorem]{Definition}
\newtheorem{lemma}[theorem]{Lemma}
\begin{document}

\title{Fair Prediction with Endogenous Behavior}
\author[1]{Christopher Jung}
\author[1]{Sampath Kannan}
\author[2]{Changhwa Lee}
\author[3]{Mallesh M. Pai}
\author[1]{Aaron Roth}
\author[2]{Rakesh Vohra}
\affil[1]{Department of Computer and Information Sciences, University of Pennsylvania}
\affil[2]{Department of Economics and Department of Electrical \& Systems Engineering, University of Pennsylvania}
\affil[3]{Department of Economics, Rice University}

\maketitle

\begin{abstract}
There is increasing regulatory interest in whether machine learning algorithms deployed in consequential domains (e.g. in criminal justice) treat different demographic groups ``fairly.'' However, there are several proposed notions of fairness, typically mutually incompatible.
Using criminal justice as an example, we study a model in which society chooses an incarceration rule. Agents of different demographic groups differ in their outside options (e.g. opportunity for legal employment) and decide whether to commit crimes. We show that equalizing type I and type II errors across groups is consistent with the goal of minimizing the overall crime rate; other popular notions of fairness are not.
\end{abstract}


\section{Introduction}

%

Algorithms to automate consequential decisions such as hiring \citep{hiring}, lending \citep{lending}, policing \citep{policing}, and criminal sentencing \citep{sentencing} are frequently suspected of being unfair or discriminatory. The suspicions are not hypothetical. The 2016 ProPublica study \citep{propublica} of the COMPAS Recidivism Algorithm (used to inform criminal sentencing decisions by attempting to predict recidivism) found that the algorithm was significantly more likely to incorrectly label black defendants as recidivism risks compared to white defendants, despite similar overall rates of prediction accuracy between populations. Since then, discoveries of ``algorithmic bias'' have proliferated, including a recent study of racial bias by algorithms that prioritize patients for healthcare \citep{sendhilScience}. Thus spurred, policymakers, regulators, and computer scientists have proposed that algorithms be designed to satisfy notions of fairness (see for instance \cite{mathbabe16,fairmlbook,ethicalalgorithm} for overviews).

This raises a question: what measure(s) of fairness should designers be held to, and how do these constraints interact with the original objectives the algorithm was designed to target? The COMPAS case illustrates that the answer is not clear. ProPublica and Northpointe (the company that designed COMPAS) advocated for different measures of fairness. ProPublica argued that the algorithm's predictions did not maintain parity in false positive and false negative rates between white and black defendants,%
\footnote{Northpointe's algorithm had differing  Type-1 and Type-2 error rates across the two groups.}
while Northpointe countered that their algorithm satisfied predictive parity.%
\footnote{Roughly, the accuracy of COMPAS scores was the same for both groups at all risk levels.}
Subsequent research identified hard trade-offs in the choice of fairness metrics: under some mild conditions, the two requirements above cannot simultaneously be satisfied  (\cite{KMR16}, \cite{Chou17}). This inspired a literature proposing (or criticizing) notions of fairness based on ethical/ normative grounds. The literature evaluates algorithms on the basis of these measures, and/or proposes novel algorithms that better trade-off the goals of the original designer (decision accuracy, algorithmic efficiency) with these fairness desiderata.%
\footnote{See e.g. \cite{DworkFairness,HPS16,sharad1,sharad2,sharad3,impossibility,gerrymander,multicalibration,implicit} for a small sample of an enormous literature.}
In general, the different proposed fairness measures are fundamentally at odds with one another. For example, in addition to the impossibility results due to \cite{KMR16,Chou17}, enforcing parity of false positive or false negative rates for e.g. parole decisions typically requires making parole decisions using different thresholds on the posterior probability that an individual will commit a crime for different groups. This has itself been identified by \citep{sharad1}  as a potential source of ``unfairness''.

This line of research is subject to two criticisms. First raised by, for example \citep{sharad1}: these notions of fairness are disconnected from and lead to unpalatable tradeoffs with other economic and social quantities and consequences one might care about. Second, the literature almost exclusively assumes that the agent types, which are relevant to the decision at hand, are exogenously determined, i.e. unaffected by the decision rule that is selected. For instance, in the criminal justice application described, individual choices of whether to commit a crime or not, and therefore the overall crime rates, are fixed and not affected by policy decisions made at a societal level (e.g. what legal standards are used to convict, policing decisions etc). In settings like this where agent decisions are exogenously fixed, \cite{sharad1} and \cite{implicit} observe that optimizing natural notions of welfare and accuracy (incarcerating the guilty, acquitting the innocent) are achieved by decision rules that select a uniform threshold on ``risk scores'' that are well calibrated --- for example, the posterior probability of criminal activity --- which tend \emph{not} to satisfy statistical notions of fairness that have been proposed in the literature. Does this mean that setting uniform thresholds on equally calibrated risk scores is better aligned with natural societal objectives than is asking for parity in terms of false positive and negative rates across populations?

In this paper, we consider a setting in which agent decisions are endogenously determined and show that in this model, the answer is \emph{no}: in fact, parity of false positive and negative rates (sometimes known in this literature as \emph{equalized odds} \cite{HPS16}) is aligned with the natural objective of minimizing crime rates. Parity of positive predictive value and posterior threshold uniformity are {\em not}. Although the model need not be tied to any particular application, we develop it using the language of criminal justice. We treat agents as rational actors whose decisions about whether or not to commit crimes are {\em endogenously} determined as a function of the incentives given by the decision procedure society uses to punish crime.  The possibility for unfairness arises because agents are ex-ante heterogeneous: their demographic group is correlated with their underlying incentives--- for example each individual has a private \emph{outside option} value for not committing a crime, and the distribution of outside options differs across groups. Our key result is that policies that are optimized to minimize crime rates are compatible with a popular measure of demographic fairness --- equalizing false positive and negative rates across demographics --- and are generally incompatible with equalizing positive predictive value and uniform posterior thresholds. Thus, which of these notions of fairness is compatible with natural objectives hinges crucially on whether one believes that criminal behavior is responsive to policy decisions or not.

Our results have direct implications for regulatory testing for unfairness. Often, in settings of interest, a regulator does not directly observe the decision rule used by an adjudicator. However, the regulator may wish to test whether the adjudicator is using a ``fair'' rule, i.e. whether the adjudicators choices are biased towards or against some demographic group. Following a tradition starting with \cite{becker}, one standard used is called an outcome test, i.e. comparing, ex-post, the classification assigned by the adjudicator to observed outcomes. For instance, in a criminal justice setting, one may compare the judge's decision to the (somehow obtained) actual innocence or guilt of the defendants, or in a lending setting, compare the lender's decision on whom to extend loans to with the actual repayment outcomes of loan applicants etc.

In this context, a given prescription on what constitutes a ``fair'' or non-discriminatory rule maps into a corresponding outcome test. In particular, a test that is popularly used by researchers and regulators  corresponds to the common-posterior-threshold rule described above. As already mentioned, this is not the best test in our model. When used, this test attempts to evaluate whether the adjudicator is using a common posterior threshold across groups by evaluating whether the marginal agents across groups have similar probabilities of different outcomes.
\footnote{For a discussion of this in the context of evaluating the fairness of lending standards, see \cite{ferguson1995constitutes}.}
However, implementing this test is difficult: identifying (and being sure that one has correctly identified) the marginal agent in each group is hard (this is roughly the infra-marginality problem, see e.g. \citep{simoiu2017problem}). For instance, there may be information observed by the decision maker but not by the regulator/ econometrician (an oft cited example is that police observe a suspect's demeanor, and use this as a factor, but this cannot be quantified).  By contrast, if our maintained assumptions are valid, then an adjudicator wishing to minimize crime should use a rule that equalizes false positive and false negative rates across demographic groups. This is easy to estimate and test: there is no need to identify a marginal agent.

\subsection{Overview of Model and Results}\label{sec:overview}
We first derive our results in an extremely simple baseline model to highlight the underlying intuition.  We then show that our conclusions are robust to a number of elaborations and generalizations of the model.

\subsubsection*{The Baseline Model}
Our baseline model (in Section \ref{sec:model}) has a mass of agents who each belong to one of two demographic groups. Each agent has a single choice on the extensive margin, for instance a binary choice of whether or not to commit a crime; or whether or not to acquire human capital, etc. To fix ideas, in this paper we frame the matter as a decision about whether to commit a crime.

An adjudicator has to classify each agent as guilty or innocent. This classification is based on  a noisy signal that the adjudicator receives of each agent's choice; the distribution of this signal depends only on the agent's choice, and not on her group. Further, the adjudicator observes the group membership of each agent. The adjudicator commits ex-ante to a classification rule, i.e. how it will classify agents as a function of the signal received, and potentially the agent's group membership.

Agents are expected payoff maximizers who enjoy a monetary benefit from crime but also a cost of being declared guilty of the crime. In choosing whether to commit a crime, they compare their expected net benefit from the crime to an outside option. The costs and benefits are privately known to the agent, but not to the adjudicator (who only sees group membership).  The only distinction between groups is that the distributions of costs and benefits may be different for different groups. For example, individuals from different groups might have  different legal employment opportunities, different costs of incarceration (e.g. differences in stigma), etc. The model is flexible enough to allow (potentially different) fractions of the population in each group who are rigidly law-abiding (i.e. do not commit a crime regardless of circumstance) or hardened criminals (i.e. will commit a crime regardless of circumstance), and a variety of responses to incentives in between these two extremes. We don't model the source of this heterogeneity: it is exogenous, and the distribution is known to the adjudicator. Given these preferences, the adjudicator's decision rule determines their choices, which in turn determines the overall crime rate in each population.

The adjudicator's objective is to minimize the overall crime rate, i.e. the total mass of agents that choose to commit a crime. While we model the adjudicator as knowing that the underlying groups are heterogeneous (i.e. knowing the above distributions that describe each group), the adjudicator is not biased for or against any group, nor is there any underlying preference for fairness. Our main result (see Section \ref{sec:baseline}) is that the classifier that minimizes the crime rate is fair according to a metric that has attracted attention in the literature: setting \emph{different} thresholds on the posterior probability of crime for each group so as to guarantee equality of false positive and negative rates. This corresponds to setting the \emph{same} threshold on signals across groups.

To dig a little deeper into this result, the equilibrium crime rate in each population can be viewed as the adjudicator's prior belief in equilibrium that an agent has committed a crime, given knowledge only of her group membership. Given the noisy signal, the adjudicator has a posterior belief that the agent has committed a crime. In a static environment, the optimal classification rule for an adjudicator who wishes to optimize classification accuracy will be a group-independent threshold on her posterior belief that an agent has committed a crime. Note that priors will generally differ between populations (because outside option distributions differ, crime rates in the two populations differ). Therefore policies corresponding to group-independent thresholds on posterior beliefs will typically correspond to applying group-dependent thresholds on signals and vice versa.

Equalizing false-positive and false-negative rates across groups then corresponds to setting identical thresholds on the raw \emph{signal} for each group. It can be viewed as a commitment to avoid conditioning on group membership, even when group membership information is ex-post informative for classification. The intuition is that if the adjudicator uses the same threshold on the posterior belief that an agent has committed a crime for each group, the decision rule is making use of information contained within each group's prior. Although this information is statistically informative of the decision made by the agent, it is not within the agent's control. Using this information therefore only distorts the (dis-)incentives to commit crime. On the other hand, if the adjudicator uses the same threshold on agents' signals for each group (and hence different thresholds on posterior beliefs), decisions are made only as a function of information under the control of the agents, and hence are more effective at discouraging crime. The equalizing of false-positive and false-negative rates across the groups follows from this.


\subsubsection*{Extensions}
The main insights of our baseline model continue to hold even when many of its core  assumptions are relaxed. We summarize them here.

First, the baseline model assumes that a signal is observed by the adjudicator for every agent (or equivalently, at equal rates across populations). There is significant empirical evidence, however, that this is often not the case: for example, arrest rates (and hence prosecution rates)  are substantially higher in minority populations for certain drug offenses, despite evidence that the underlying prevalence is more uniform across groups \citep{drugdisparities}. Section \ref{subsec:hetobs} introduces an elaboration of the baseline model based on \cite{Persico02}.

In this variant, the adjudicator must rely on an intermediary (police) to inspect agents and generate a signal. The adjudicator observes a signal from an individual only if the police inspect them, and individuals are punished only if they are inspected \emph{and} their signal crosses the adjudicator's threshold for establishing guilt. The police have their own objectives: to maximize the number of successful inspections (e.g. inspections that result in an arrest). Formally, we study the following game: The adjudicator commits to a decision rule on guilt/ innocence based on signals. Then police and agents play a simultaneous move game: The police choose an \emph{inspection intensity} for each group to maximize their objectives, given the adjudicator's rule and crime rates in each group, subject to an overall capacity constraint. Agents of different groups commit crime based on both the adjudicator's rule, and the police's inspection rate for their group. We show that similar results continue to hold in this model, i.e. the optimal rule for the adjudicator will continue to equalize the disincentive to commit crime across the groups. This will result in equalizing \emph{conditional} false-positive and false-negative rates across the groups, i.e. the rate conditional on being inspected.

In Section \ref{sec:hetsignals}, we consider a setting in which the signal distribution depends not just on the action chosen by the agent (crime or no crime) but may also depend on their group. For example, the underlying signal generating process may be less noisy for agents from certain groups, and noisier for others. Of course, in general, the structure of the optimal solution is closely tied to the relationship between the signal distributions, and our results cannot carry over without further assumptions.\footnote{Consider, for example, the case in which there is a perfectly informative signal for one group but not for another. Then, the optimal solution will have zero error for the former group, whereas error will be inevitable for the latter, for whom we only have a noisy signal.} Nevertheless, the insights provided by our previous analysis allow us to study the tradeoff between the adjudicator's objective (minimizing overall crime) and various fairness notions.  In particular, Theorem \ref{thm:het_eqInc} gives conditions under which our baseline insights  continue to hold, i.e. conditions under which the adjudicator's optimal rule will continue to equalize the disincentive to commit crime across groups.  Conversely,  Theorem \ref{thm:het_lem_comparison} shows conditions under which rules that equalize false-positive or false-negative rates across groups outperform rules that equalize incentives.

Finally, note that crime rate in our model (or what is referred to in the classification literature as ``base rate'') is endogenous, where it is normally modeled as exogenously fixed. This allows us to consider an additional notion of fairness that is ill-defined in the existing literature; namely classification rules that equalize base rate across groups. In Section \ref{sec:equalbase}, we study when this may be better or worse than the other notions we considered above in terms of the adjudicator's objective (overall crime rate).

\subsection{Additional Related Work}
The economic literature starting from \cite{arrow1972some} has  considered models of discrimination where agent decisions (e.g. to gain education) are endogenous and their incentives determined by a principal's choice (e.g. employer's hiring rule).  \cite{CL93,FV92} study models in which individuals have a choice about how much effort to exert, and identical populations can have different outcomes in (e.g. hiring markets) because of asymmetric self-confirming equilibria. There has also been extensive interest in the design of affirmative action policies for example in higher education. \cite{loury} makes the case that affirmative action may be necessary to correct historical inequity by constructing a dynamic model in which heterogeneity between two groups may persist if the principal uses a non-discriminatory rule going forward. The subsequent literature is too large to comprehensively cite, see \cite{fang2011theories} for a survey.

There has also been substantial interest in evaluating outcome data for evidence of discrimination, using (or developing) an underlying theoretical prediction of how such discrimination would manifest: see e.g. \cite{knowles2001racial}, \cite{Persico02} or \cite{anwar2006alternative} in the context of policing/ traffic stops. A large literature has studied lending data for evidence of discrimination against women and minorities: see e.g. \cite{ferguson1995constitutes} or \cite{ladd1998evidence} for overviews of both the debate on what measures of (un-)fairness to use, and an overview of the existing research.

More recently, in the computer science literature, several papers consider effort-based models that are similar in spirit to \cite{CL93,FV92}. \cite{HC18} propose a  two-stage  model of a labor market with a ``temporary'' (i.e. internship) and ``permanent'' stage, and study the equilibrium effects of imposing a  fairness constraint (``statistical parity'', which corresponds to hiring from two populations at equal rates) on the temporary stage. \cite{fat20} consider a model of the labor market with higher dimensional signals, and study equilibrium effects of ``subsidy'' interventions which can lessen the cost of exerting effort. \cite{downstream} study the effects of admissions policies on a two-stage model of education and employment, in which a downstream employer makes rational decisions, but student types are exogenously determined. Two recent papers \cite{delayed,delayed2} study non game-theoretic models by which classification interventions in an earlier stage can have effects on individual type distributions at later stages, and show that for many commonly studied fairness constraints (including several that we consider in this paper), their effects can either be positive or negative in the long term, depending on the functional form of the relationship between classification decisions and changes in the agent type distribution.

\renewcommand{\Re}{\mathbb{R}}

\section{Preliminaries} \label{sec:model}
\subsubsection*{Baseline Model}
Each agent belongs to a group $g \in \groups$. A group corresponds to some observable characteristic of the agent, for instance race or gender. There are $N_g$ agents in group $g$. For simplicity assume just two groups $\{1,2\}$ though the results extend straightforwardly to any finite number. Each agent makes a single binary decision to either commit a crime ($c$) or remain innocent ($i$).

Then, for each agent, the adjudicator observes a random signal $\sigVal \in \Re$ which is informative of the agent's guilt/innocence. The distribution of the signal depends only on whether the agent has committed a crime (and is therefore conditionally independent of their group). Criminals' signals are drawn according to the distribution $\crimeSigCDF$  (with pdf $\crimeSigPDF$) and innocents' signals drawn from $\noncrimeSigCDF$ (with pdf $\noncrimeSigPDF$). It is without loss of generality (reordering signals if necessary) to assume that the signal distributions satisfy the Monotone Likelihood Ratio Property (MLRP), i.e. higher signals imply a higher likelihood of guilt.

The adjudicator commits to a decision rule $\beta$, which labels an agent in group $g$ with signal $s$ as guilty with probability $ \probPolicy_g(s)\in [0,1] $. Note that implicitly this means the adjudicator perfectly observes an agent's group membership, along with the signal, at the time of adjudication. We write $\guilty=1$ to indicate that the agent is labeled guilty and $\guilty=0$ otherwise.

Now we describe agents' incentives to commit a crime in the first place. The agent receives a reward $ \crimeReward $ when he commits a crime, but pays a penalty of $ \crimeCost $ if he is labeled as guilty. An agent who does not commit a crime receives his outside option value $ \outOptVal $. All three quantities are privately known only to the agent and are drawn independently from a distribution that may potentially differ across the groups.  An agent in group $ g $ commits a crime if his net utility from committing a crime is higher than not:
\begin{align}
&\crimeReward - \crimeCost \Pr ( \guilty =1 \mid c, \group) \geq \outOptVal - \crimeCost \Pr(\guilty =1 \mid i, \group) \label{eqn:ic},
\intertext{which can be written as}
& \Pr ( \guilty =1 \mid c, \group) - \Pr ( \guilty =1 \mid i, \group)  \leq \frac{\crimeReward - \outOptVal}{\crimeCost}. \label{eqn:ic2}
\intertext{where $\frac{\rho-\omega}{\kappa} $ is the \emph{marginal benefit} of committing a crime normalized by the penalty. Define}
& \margCostCrime_g = \Pr(\guilty=1 \mid c, \group) - \Pr(\guilty=1 \mid i,g) \label{eqn:disincentive}
\end{align}
as the \emph{disincentive} for committing a crime: it is the group specific additional probability of being found guilty having committed a crime relative to not.  Then, the crime rate of group $g$ can be expressed in terms of $\outOptCdf_g$, the survivor function (i.e. 1-CDF) associated with the relevant quantity on the right hand side of \eqref{eqn:ic2} given the joint distribution of $\crimeReward, \crimeCost$ and $\outOptVal$:
\begin{align}
    	\crimeRate_g =  \Pr\left(  \margCostCrime_{g} \le \frac{\crimeReward - \outOptVal}{\crimeCost} \right) = \outOptCdf_g\left(  \margCostCrime_{g} \right).\label{eqn:crimerate}
\end{align}
 The adjudicator's objective  is to minimize the overall crime rate, i.e. to solve
\begin{align}
\min_{\beta \in B}	\sum_{g \in \groups} \numPeople_g  \crimeRate_g. \label{eqn:obj}\tag{OPT}
    \end{align}
where of course, $\beta$ determines $\Delta_g$ by \eqref{eqn:disincentive} and therefore $CR_g$ as per \eqref{eqn:crimerate}. Here, $B$ is the set of all feasible policies for the adjudicator, i.e. $B=\{\beta_g, g\in \groups: \beta_g: \Re \rightarrow [0,1]\}$.

\subsection{Fairness Measures}\label{fairmeasures}
We are interested in how decision rules $\beta$ respecting various notions of fairness perform relative to the optimal policy, and to each other. There are five main notions of fairness that we discuss throughout this paper, each of which corresponds to equalizing some statistical quantity across groups. Three of them have been considered both in the literature and in the popular press: equalizing false positive rates, false negative rates, and positive predictive value. These three notions of fairness are of particular interest to us because it has been shown that attaining all three measures simultaneously is impossible (\cite{KMR16,Chou17}).

Given a policy $ \beta_g $ for group $g$, true positive rate (TPR), false positive rate (FPR), false negative rate (FNR), and positive predictive value (PPV) are defined as
\begin{align}
    &\TPR_g = \Pr(\guilty=1 \mid \crime, g)=\int_{\mathbb{R}}\crimeSigPDF(\sigVal)\probPolicy_g(s)d\sigVal \label{eqn:tpr}\tag{TPR}\\
	&\FPR_g = \Pr(\guilty=1 \mid i,g ) =\int_{\mathbb{R}}\noncrimeSigPDF(\sigVal)\probPolicy_g(s)d\sigVal\label{eqn:FPR}\tag{FPR}\\
	&\FNR_g = \Pr(\guilty=0 \mid c,g) =\int_{\mathbb{R}}\crimeSigPDF(\sigVal)(1-\beta_g(\sigVal))d\sigVal \,\, (= 1- \TPR_g) \label{eqn:FNR}\tag{FNR}\\
	&\PPV_g = \Pr(\crime \mid \guilty=1, \group)=\frac{\crimeRate_g \TPR_g}{\crimeRate_g \TPR_g + (1- \crimeRate_g) \FPR_g} \label{eqn:PPV}\tag{PPV}
	\intertext{Note that in light of these definitions, we can rewrite \eqref{eqn:disincentive} as:}
	&\margCostCrime_g = \TPR_g - \FPR_g \label{eqn:disincentive2} \tag{$\margCostCrime$}
\end{align}

Additionally, we propose two {\em new} notions of fairness: equalizing disincentives (denoted $\margCostCrime$) and equalizing crime rates (denoted $\crimeRate$). We say the policy $\beta$ achieves fairness notion $\fairnessNotion \in \{\FPR, \FNR, \PPV, \margCostCrime, \crimeRate\}$ if the resulting respective quantity is the same across the groups when the adjudicator chooses policy $\beta$.  We write $ \Beta_{\fairnessNotion} $ to be the set of all policies that achieve fairness notion $\fairnessNotion$.

Given this framework we are interested in two questions. First, which of these fairness notions is compatible with the adjudicator's problem \eqref{eqn:obj}? Second, under what conditions is a particular fairness notion better than another in terms of the objective of minimizing overall crime rate, i.e. when do we have that for fairness notions $\fairnessNotion, \fairnessNotion'$:
\[
\min_{\beta \in \Beta_{\fairnessNotion}}
\sum_{g \in \groups} \numPeople_g  \crimeRate_g
\le
\min_{\beta \in \Beta_{\fairnessNotion'}}
\sum_{g \in \groups} \numPeople_g  \crimeRate_g.
\]

One final piece of notation will be useful. We denote by $\beta^\star$ the solution to the adjudicator's problem \eqref{eqn:obj}, i.e. the policy that minimizes crime overall. We sometimes refer to this as the optimal policy.  Further, for fairness notion $\fairnessNotion$, we denote as $\beta^\star_\fairnessNotion$ the solution to the adjudicator's problem among all rules that satisfy fairness notion $\fairnessNotion$,i.e. the rule that solves
\begin{align*}
    \min_{\beta \in \Beta_{\fairnessNotion}}
\sum_{g \in \groups} \numPeople_g  \crimeRate_g.
\end{align*}

\section{Results in the Baseline Model} \label{sec:baseline}
The main result we build around is that the solution to the adjudicator's problem \eqref{eqn:obj} is naturally ``fair'' in terms of three of the five measures above. It provides an interesting counterpoint to the impossibility results of \cite{KMR16} and \cite{Chou17}. Those results state that it is impossible to simultaneously equalize  false negative rates, false positive rates, and positive predictive value across groups. This raises a question of which of the fairness measures should be preferred over the others. By endogenizing the base rate of criminal activity, we find that equalizing false positive rates and equalizing false negative rates are preferred to equalizing positive predictive value in the sense that the former two are compatible with the optimal policy while the latter is not. Formally,

\begin{theorem}\label{thm:opt}
The adjudicator's optimal policy $\probPolicy^\star$ (i.e. the policy which solves \eqref{eqn:obj}) equalizes the disincentive to commit crime  \eqref{eqn:disincentive2} across groups. As a result, it also equalizes the false negative rates \eqref{eqn:FNR}, and false positive rates \eqref{eqn:FPR}.
\end{theorem}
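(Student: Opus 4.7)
The plan is to exploit the fact that the adjudicator's objective decouples across groups and depends on each policy $\probPolicy_\group$ only through the scalar disincentive $\margCostCrime_\group$.

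First, I would rewrite \eqref{eqn:obj} using \eqref{eqn:crimerate} as $\sum_{\group\in\groups} \numPeople_\group\,\outOptCdf_\group(\margCostCrime_\group)$, where each $\outOptCdf_\group$ is a survivor function and hence non-increasing. Since the feasibility set $\Beta$ imposes no coupling between the per-group policies, the problem separates, and minimizing the sum is equivalent to independently \emph{maximizing} each disincentive $\margCostCrime_\group = \int_{\sigSet} (\crimeSigPDF(\sigVal) - \noncrimeSigPDF(\sigVal))\,\probPolicy_\group(\sigVal)\,d\sigVal$ over $\probPolicy_\group:\sigSet\to[0,1]$.

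Second, a pointwise Neyman--Pearson-style argument shows that this linear functional in $\probPolicy_\group$ is maximized by the indicator $\probPolicy_\group^\star(\sigVal) = \ind\{\crimeSigPDF(\sigVal) > \noncrimeSigPDF(\sigVal)\}$, with an arbitrary choice where the two densities coincide. MLRP then guarantees that the superlevel set $\{\sigVal : \crimeSigPDF(\sigVal) > \noncrimeSigPDF(\sigVal)\}$ is an upper tail of signals, so the optimum can be written as a signal-threshold rule $\probPolicy^\star(\sigVal) = \ind\{\sigVal \geq \sigVal^\star\}$ with $\sigVal^\star$ the crossover point of the two densities.

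Third, since $\crimeSigPDF$ and $\noncrimeSigPDF$ do not depend on $\group$, neither does $\sigVal^\star$, and hence the optimal policy is itself group-independent: $\probPolicy^\star_1 = \probPolicy^\star_2$. But then $\TPR_\group$, $\FPR_\group$, $\FNR_\group = 1 - \TPR_\group$, and $\margCostCrime_\group = \TPR_\group - \FPR_\group$ are all integrals of one common function against a single pair of common densities, and therefore take identical values across groups. This delivers the three equalization claims simultaneously.

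The only step demanding care is the pointwise maximization, which relies on the separability established in the first paragraph: once one notes that the objective depends on $\probPolicy_\group$ only through the linear functional $\margCostCrime_\group$ and that no cross-group constraint couples the policies, the rest is immediate. A minor caveat worth acknowledging is that $\probPolicy^\star$ is uniquely pinned down only off the (typically measure-zero) set $\{\sigVal : \crimeSigPDF(\sigVal) = \noncrimeSigPDF(\sigVal)\}$, but any group-independent specification on that set preserves all of the equalizations asserted above.
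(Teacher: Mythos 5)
Your proposal is correct and follows essentially the same route as the paper's proof: decouple the objective across groups, observe that minimizing each $\outOptCdf_\group(\margCostCrime_\group)$ amounts to maximizing the linear functional $\margCostCrime_\group$, take the pointwise maximizer $\ind\{\crimeSigPDF(\sigVal)\geq\noncrimeSigPDF(\sigVal)\}$, and note that its group-independence forces equal $\FPR$ and $\FNR$. Your added remarks about MLRP yielding a threshold form and the measure-zero tie set are fine refinements (the paper handles the former in a separate remark) but do not change the argument.
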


\begin{proof}
First note that because $\probPolicy_g$ can be set independently for each group $g$, minimizing the total crime rate is achieved by individually minimizing the crime rate within each group.  Recall that the crime rate within a group is $ \outOptCdf_g(\margCostCrime_g)$. This in turn is minimized by maximizing the disincentive of crime $ \margCostCrime_g $, since $\outOptCdf_g$ being a survivor function is non-increasing.  Recall that
\[
 \margCostCrime_g = 	\int_{\sigSet} (\crimeSigPDF(\sigVal)-\noncrimeSigPDF(\sigVal))\beta_g(s)d\sigVal.
\]
Therefore the optimal $\probPolicy_g$ is independent of the (group-dependent) distribution over private values defining $H_g$, and is therefore the same for all groups:
\[
	\probPolicy_g(s)=\begin{cases}
		1\text{ if } \crimeSigPDF(\sigVal)\geq \noncrimeSigPDF(\sigVal),\\
		0\text{ if } \crimeSigPDF(\sigVal) < \noncrimeSigPDF(\sigVal),
	\end{cases}
\]
Since the disincentive to commit crime is a function only of $\probPolicy_g$, this results in the same disincentive to commit crime at the optimal solution.

Finally note that both the $ \FNR_g $ and $\FPR_g $ for each group are a function only of $\noncrimeSigPDF(\sigVal)$ and $\crimeSigPDF(\sigVal)$ (which are identical across groups), and the chosen policies $\beta_g(\sigVal)$, which we have shown in the optimal solution will be identical across groups. Hence, the adjudicator's optimal policy will equalize false positive rates and  false negative rates across groups.
\end{proof}

Rather than thinking of an arbitrary function $\beta_g(\sigVal)$, it is more natural to think of the adjudicator as selecting a threshold $\sigThresh_g$ for each group $g$ so that any member of group $g$ whose signal $\sigVal$ exceeds $\sigThresh_g$ is labeled guilty. That is
\[
    \probPolicy_g(s)=\begin{cases}
		1\text{ if } s \ge \sigThresh_g\\
		0\text{ if } s < \sigThresh_g.
	\end{cases}
\]

\begin{remark}\label{thresholdopt}
Since, $ \crimeSigPDF$ and  $\noncrimeSigPDF $ satisfy the MLRP property  (i.e. $ \frac{\crimeSigPDF(\sigVal)}{\noncrimeSigPDF(\sigVal)} $ is non-decreasing in $ \sigVal $), $\beta^\star$ is a threshold policy by observation.  Note that if strict MLRP holds, then the optimal thresholds are unique.
\end{remark}

Under a threshold policy, group $g$'s true positive rate reduces to $\TPR_g =\crimeSigCDF(\sigThresh_g)$ and the false positive rate simplifies to  $\FPR_g =1-\noncrimeSigCDF(\sigThresh_g)$.

\subsection{Discussion}




\subsubsection*{Posterior Thresholds}
It is interesting to contrast the policy of setting equal thresholds on the signal, which we show to be the optimal policy here, as opposed to equal thresholds on the `posterior' or another calibrated risk score. The latter is advocated in \cite{sharad2}, for example.

In that paper, the authors consider a setting where crime choices are exogenous/ fixed, and study the choice of policy that minimizes weighted misclassification  rates (i.e. acquittal of the guilty and incarceration of the innocent). They show that an optimal policy involves a common `threshold' on the posterior across groups, i.e.,  first, the adjudicator estimates the prior probability that an individual has committed a crime by considering the base rate of crime for the individual's group and then uses the observed signal to update her prior probability to her posterior belief that the individual in question has committed a crime. Second,  the individual is deemed guilty if the posterior probability of guilt exceeds some threshold.

Of course, in our setting, choice of crime is endogenous, and the planner's objective function is minimizing crime rate rather than minimizing  mislabeling costs. Nevertheless, it is interesting to inquire into the implications of equalizing the thresholds on the  posterior in our setting.

In our setting, the posterior after observing the signal $ s $ and the group $ g $ is
\[
    \Pr(c\mid s,g) = \frac{\crimeSigPDF(s)\crimeRate_g}{\crimeSigPDF(s) \crimeRate_g + \noncrimeSigPDF(s)(1-\crimeRate_g)}
\]
which increases in $ s $ when the signal structure satisfies monotone likelihood ratio property. Thresholding the posterior corresponds to choosing a value $ \pi_g\in [0,1] $ and classifying as guilty whenever $ \Pr(c\mid s,g)\geq \pi_g $.

Let $ T^* $ be the threshold on the signal under the planner's optimal policy. The optimal policy classifies the defendant as guilty if the signal $ s $ exceeds the threshold $ T^* $. With the monotone likelihood ratio property, this implies a threshold rule on the posterior which is to classify the defendant as guilty whenever the posterior exceeds
\[
    \pi_g = \Pr(c \mid T^*,g).
\]
By observation, the posterior thresholds are equalized across groups if and only if $\crimeRate_g = \crimeRate_{g'}$, i.e.  if and only if crime rates are equalized under the optimal policy $T^*$. This in turn will only occur if $H_g( \margCostCrime) = H_{g'} (\margCostCrime)$ which of course will not obtain in general because $H_g$ need not be the same as $ H_{g'}$.

\section{Heterogenous Signal Observation}\label{subsec:hetobs}

The baseline model assumes that when an agent commits a crime, the adjudicator observes the signal generated and adjudicates. What if the signals generated by members of each group are observed at different rates? This can happen if the adjudicator relies on an intermediary to record the signals. For example, in the crime application, the groups may be policed at different rates. Their (dis)incentives to commit crime will then differ as a result. Critically, we suppose that the police's incentives differ from the adjudicator's. The adjudicator's choice of rule therefore influences the police's choice on how to divide their manpower across different groups. Both the adjudicator's rule and the police's choice influence the incentives of agents to commit crime, and ultimately determine the overall crime levels in society.

To model this, we build upon upon the framework of \cite{Persico02}. There are a continuum of police officers who choose inspection intensities for each group $\{\inspecInsty_g\}_{g \in \groups}$ given a search capacity $\searchCap$. The choice of inspection intensity determines the rate at which signals are observed from the two groups: upon inspection, a police officer observes a signal about whether a crime was committed or not. In \cite{Persico02}, the signal is assumed to be perfect. We depart from this assumption in that in our setting, the observed signal is {\em noisy}, as in our baseline model. The adjudicator, as before, wishes to minimize the overall crime rate. As in \cite{Persico02}, the police have different incentives. Specifically, each police officer tries to maximize the number of `successful' inspections, i.e. where the signal recorded exceeds the threshold set by the adjudicator. As in \cite{Persico02}, we motivate this incentive as driven by the career concerns of individual police officers (who are e.g. promoted if they have many successful arrests etc.).

The timing of the game is therefore: (1) The adjudicator chooses the function $\probPolicy$, (2) the police takes this as fixed and chooses inspection intensities $ \inspecInsty_g\in [0,1] $ for each group subject to the constraint $ \numPeople_1 \inspecInsty_1 + \numPeople_2 \inspecInsty_2 \leq \searchCap $ (recall that $\numPeople_g$ is the number of agents in group $g$),\footnote{To make this model non trivial, we assume that search capacity is limited, i.e., $\searchCap < \numPeople_1 + \numPeople_2 $.} and then (3) given the adjudicator's choice $\probPolicy$, and inspection probability $ \inspecInsty $, an agent of group $g$ with crime reward $\crimeReward$, cost of being found guilty $\crimeCost$ and outside option commits a crime if (analogous to (\ref{eqn:ic}), but now taking into account also the probability of inspection):
\begin{align*}
    &\crimeReward - \inspecInsty_g \crimeCost  \Pr( q=1 \mid c, g) \geq \outOptVal - \kappa \inspecInsty_g \Pr (q=1 \mid i,g), \\
    \text{i.e., whenever }\,  & \inspecInsty_g \margCostCrime_g \leq \frac{\crimeReward - \outOptVal}{\crimeCost}
\end{align*}
where $\margCostCrime_g$ is as defined in \eqref{eqn:disincentive}.
By analogy with \eqref{eqn:crimerate} an agent of group $g$ commits crime with probability $H_g (\inspecInsty_g \margCostCrime_g)$.

As a benchmark, consider a setting where the adjudicator can choose both $\beta$ and $\theta$. Here the objective function is to minimize the overall crime rate just as in \eqref{eqn:obj}, and the additional constraint simply reflects that the choice of inspection rule must be feasible, i.e. the total level of inspection cannot exceed total search capacity $S.$ That is to say the adjudicator's problem in this benchmark can be written as:
\begin{equation}
		\begin{aligned}
			\min_{\{\beta, \theta\}} &\sum_g \numPeople_g \outOptCdf_g( \theta_g \margCostCrime_g) \\
			\text{s.t. } &\sum_g \numPeople_g \inspecInsty_g \leq \searchCap.
		\end{aligned}\label{persico-first-best}
	\end{equation}
We refer to the solution of (\ref{persico-first-best}) as the {\bf first-best} solution.

Now return to our setting above where the adjudicator chooses $\beta$ but not $\inspecInsty$. The police take $\beta$ as given and choose $\theta$ to maximize the number of successful inspections. Note that an inspection in group $g$ successful with probability $H_g(\inspecInsty_g \margCostCrime_g).$
As in \cite{Persico02}, we assume an interior equilibrium solution, i.e.,  $ \inspecInsty_g>0\ \forall g$.%
\footnote{A corner solution entails the police completely ignoring a group. In this case, the setting is trivial because conditional false positive rates are undefined for the ignored group.}
Intuitively, in this case, the optimal strategy for the police will equalize the crime rate between the two groups.  Otherwise, police that are trying to maximize their successful inspection probability will exclusively search the group with the highest crime rate.
Recognizing this, the adjudicator will solve the following problem:
	\begin{equation}
		\begin{aligned}
			\min_{\beta, \theta} &\sum_g \numPeople_g \outOptCdf_g(\theta_g \margCostCrime_g)\\
			\text{s.t. } &\sum_g \numPeople_g \inspecInsty_g = \searchCap\\
			&\outOptCdf_1( \theta_1 \margCostCrime_1)=\outOptCdf_2( \theta_2 \margCostCrime_2).
		\end{aligned}\label{persico-second-best}
	\end{equation}		
The solution to problem \eqref{persico-second-best} is the {\bf second-best} solution.

We note that because groups will be inspected at different rates, the TPR and FPR  as we have defined them should correctly be called the \emph{conditional} true and false positive rates respectively, i.e., the rates conditional on being inspected.\footnote{We observe that these conditional rates are implicitly what has been studied in the fairness in machine learning literature, because these are the rates that can be computed from the data.}  Theorem \ref{thm:opt} now carries over mutatis mutandis, i.e. in both the first and second best solution, the thresholds will be set so as to equalize the conditional false and true positive rates $\CFPR$ and $\CTPR$. Proofs for this and subsequent theorems are in the appendix.


\begin{restatable}{theorem}{thmPersicoEquivalenceFirstSecond}
\label{thm:persico-equivalence-first-second}
The optimal solutions to both the first best (\ref{persico-first-best}) and second best (\ref{persico-second-best}) equalize the $\CFPR$ and $\CTPR$ across groups.
\end{restatable}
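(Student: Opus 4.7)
The plan is to show that the pointwise rule $\probPolicy^\star(\sigVal) = \ind[\crimeSigPDF(\sigVal) \geq \noncrimeSigPDF(\sigVal)]$ from Theorem~\ref{thm:opt}, which does not depend on the group, is optimal for both groups in both \eqref{persico-first-best} and \eqref{persico-second-best}. Once this is established, equality of $\CFPR$ and $\CTPR$ across groups is immediate: $\CTPR_g$ and $\CFPR_g$ are defined by integrating $\probPolicy_g$ against the group-independent densities $\crimeSigPDF$ and $\noncrimeSigPDF$, respectively, so $\probPolicy_1 = \probPolicy_2$ at the optimum forces the rates to coincide (the ``conditional'' qualifier is precisely what strips any further dependence on $\inspecInsty_g$).

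For the first-best the argument is a direct extension of Theorem~\ref{thm:opt}. The objective $\sum_g \numPeople_g \outOptCdf_g(\inspecInsty_g \margCostCrime_g)$ and the capacity constraint depend on $\probPolicy$ only through the disincentives $\margCostCrime_g$, and each $\margCostCrime_g$ depends only on the group's own $\probPolicy_g$. So for any fixed $\inspecInsty\geq 0$ the problem decouples across groups, and since $\outOptCdf_g$ is non-increasing, minimizing $\outOptCdf_g(\inspecInsty_g \margCostCrime_g)$ reduces to maximizing $\margCostCrime_g$, which is achieved by the pointwise rule $\probPolicy^\star$, identical across groups.

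The second-best is more delicate because the equalization constraint $\outOptCdf_1(\inspecInsty_1 \margCostCrime_1) = \outOptCdf_2(\inspecInsty_2 \margCostCrime_2)$ couples the two groups, so a naive swap of $\probPolicy_g$ for $\probPolicy^\star$ would change $\margCostCrime_g$ and break equalization. My approach is a rescaling-and-pump-up argument. Suppose $(\probPolicy, \inspecInsty)$ is second-best optimal with induced disincentives $\margCostCrime_g$, and let $\margCostCrime^\star$ be the disincentive under $\probPolicy^\star$; by pointwise optimality $\margCostCrime^\star \geq \margCostCrime_g$ for each $g$. Define $\inspecInsty'_g = \inspecInsty_g \margCostCrime_g / \margCostCrime^\star$. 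Then $\inspecInsty'_g \margCostCrime^\star = \inspecInsty_g \margCostCrime_g$, so each group's crime rate is unchanged and equalization still holds; also $\inspecInsty'_g \leq \inspecInsty_g \leq 1$ and $\sum_g \numPeople_g \inspecInsty'_g \leq \searchCap$ with strict inequality unless $\margCostCrime_g = \margCostCrime^\star$ for every $g$. If the inequality is strict, one can move along the equalization locus $\outOptCdf_1(\inspecInsty_1 \margCostCrime^\star) = \outOptCdf_2(\inspecInsty_2 \margCostCrime^\star)$ (parameterizable because each $\outOptCdf_g$ is strictly decreasing on the relevant range) to simultaneously raise $\inspecInsty'_1$ and $\inspecInsty'_2$ until capacity binds; this strictly decreases both crime rates, strictly improving the objective and contradicting optimality. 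Hence $\margCostCrime_g = \margCostCrime^\star$ for every $g$, which forces $\probPolicy_g = \probPolicy^\star$ almost everywhere.

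The main obstacle is precisely the cross-group coupling in the second-best: the rescaling $\inspecInsty'_g = \inspecInsty_g \margCostCrime_g / \margCostCrime^\star$ is the key device because it preserves the product $\inspecInsty_g \margCostCrime_g$ (and hence the crime rates and equalization) while releasing capacity, and the pump-up step converts that slack into a strict objective improvement. The only subtlety is the local parameterization of the equalization locus, which needs strict monotonicity of $\outOptCdf_g$ near the optimum; the degenerate case $\margCostCrime^\star = 0$ (i.e.\ $\crimeSigPDF = \noncrimeSigPDF$ almost everywhere) is vacuous and handled separately, as is the edge case where $\outOptCdf_g$ is flat over a region. With $\probPolicy_1 = \probPolicy_2 = \probPolicy^\star$ a.e.\ established in both problems, equality of $\CFPR$ and $\CTPR$ across groups follows directly from \eqref{eqn:tpr}--\eqref{eqn:FPR}.
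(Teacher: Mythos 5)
Your proposal is correct in substance and follows the paper's overall strategy: show that any optimum of either program must push each group's disincentive to its maximum, and then conclude that, because the densities $\crimeSigPDF,\noncrimeSigPDF$ are common to both groups, the maximizing rule $\probPolicy^\star$ is group-independent, so $\CTPR$ and $\CFPR$ coincide across groups. Your first-best step is identical to the paper's decoupling argument. Where you genuinely differ is the second-best perturbation: the paper raises one group's disincentive at a time (holding the other's fixed), slightly lowers that group's $\inspecInsty$ so its crime rate still strictly falls, spends the freed capacity on the other group, and uses a one-dimensional intermediate-value argument in $\inspecInsty'_1$ to restore the equalization constraint while keeping $\sum_g \numPeople_g \inspecInsty_g = \searchCap$ exactly; you instead raise both disincentives to $\overline{\margCostCrime}$ simultaneously and use the product-preserving rescaling $\inspecInsty'_g = \inspecInsty_g \margCostCrime_g / \overline{\margCostCrime}$, which keeps crime rates and equalization untouched while making the capacity slack explicit, and then pump both intensities back up along the equalization locus until capacity binds. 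Your rescaling device is arguably cleaner and more symmetric, while the paper's one-group-at-a-time exchange is what makes its induction over more than two groups immediate. Two caveats you should be aware of are shared with (not worse than) the paper's own proof: both arguments implicitly need $\margCostCrime_g > 0$ at the supposed optimum (otherwise your rescaling, or the paper's $(1+\epsilon)$ factor, is ill-defined), and both implicitly need the bound $\inspecInsty_g \le 1$ not to bind along the perturbation --- your pump-up could in principle hit $\inspecInsty_2 = 1$ before capacity binds, exactly as the paper's increased $\inspecInsty'_2$ could exceed $1$; neither proof addresses this corner. Finally, your last inference that $\margCostCrime_g = \overline{\margCostCrime}$ forces $\probPolicy_g = \probPolicy^\star$ almost everywhere (and hence equal rates) uses that the tie set $\{\sigVal : \crimeSigPDF(\sigVal) = \noncrimeSigPDF(\sigVal)\}$ is null, which is guaranteed by the paper's strict MLRP assumption; it is worth stating that explicitly, since equality of disincentives alone does not pin down $\CTPR$ and $\CFPR$ on a positive-measure tie region.
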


While the optimal $\beta$ under the first and second best outcomes coincide, the optimal inspection intensities need not. In particular, the first- and second-best solutions coincide when $ \outOptCdf $ is convex. However, when $\outOptCdf$ is concave, then the inspection intensities under the second-best outcome \emph{maximize} the average number of crimes out of all possible search intensities given the optimal signal thresholds.

\begin{restatable}{theorem}{thminspection}
\label{inspection}
Suppose that the $ \outOptCdf_g $ belong to the same location family, i.e. $ \outOptCdf_g(\sigVal)=\outOptCdf(\sigVal-\mu_g) $ for some $ \mu_g $ for each $ i \in \groups$ and that $ \outOptCdf $ is convex (concave). Then, the inspection intensities in the second best solution minimize (maximize) the crime rate among all thresholds that equalize conditional false positive rates and conditional true positive rates $\CFPR$ and $\CTPR$.
\end{restatable}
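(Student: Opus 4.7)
The plan is to reduce the claim to a one-line Jensen's inequality, after isolating the right comparison set. By Theorem~\ref{thm:persico-equivalence-first-second}, the second-best policy $\beta^{SB}$ is a common threshold policy across groups, yielding a common disincentive $\Delta$; under MLRP this is (up to measure-zero modifications) the only form of $\beta$ that equalizes $\CFPR$ and $\CTPR$. I fix this $\beta^{SB}$ throughout. Then every inspection profile $\theta$ satisfying the capacity constraint $\sum_g N_g \theta_g = S$ automatically equalizes $\CFPR$ and $\CTPR$, since these depend only on $\beta$. Thus the set over which I need to compare $\theta^{SB}$ is exactly $\{\theta : \sum_g N_g \theta_g = S\}$ paired with $\beta^{SB}$.

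Next, I use the location-family hypothesis to write the total crime rate as $\sum_g N_g H(\theta_g \Delta - \mu_g)$, and introduce the change of variables $y_g := \theta_g \Delta - \mu_g$. The capacity constraint becomes the linear constraint $\sum_g N_g y_g = \Delta S - \sum_g N_g \mu_g =: K$, which is a constant over the feasible set, while the objective becomes $\sum_g N_g H(y_g)$. With weights $w_g := N_g/N$, where $N := \sum_g N_g$, Jensen's inequality for convex $H$ yields
\begin{align*}
\sum_g w_g H(y_g) \;\ge\; H\!\left(\sum_g w_g y_g\right) \;=\; H(K/N),
\end{align*}
with equality iff $y_1 = y_2$, and the inequality reverses for concave $H$. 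The police's interior equilibrium condition $H_1(\theta_1^{SB}\Delta) = H_2(\theta_2^{SB}\Delta)$ that pins down $\theta^{SB}$ in~\eqref{persico-second-best} becomes, after the change of variables, $H(y_1^{SB}) = H(y_2^{SB})$, and since $H$ is strictly decreasing as a survivor function, this is exactly $y_1^{SB} = y_2^{SB}$. Hence $\theta^{SB}$ is precisely the equality case in Jensen, so it minimizes $\sum_g N_g H(y_g)$ over the feasible set when $H$ is convex and maximizes it when $H$ is concave.

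I expect the Jensen step itself to be mechanical; the main care required is in the preliminary claim that equalizing $\CFPR$ and $\CTPR$ reduces the comparison to inspection intensities paired with the common-threshold policy $\beta^{SB}$. This rests on MLRP together with the fact that the crime rate depends on $\beta$ only through $\Delta = \TPR - \FPR$, so that any two policies equalizing $\CFPR$ and $\CTPR$ at the second-best values of these quantities induce the same incentives and are therefore interchangeable for the purposes of computing the crime rate.
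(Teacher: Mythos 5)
Your proposal is correct, and it takes a genuinely different route from the paper. The paper's proof is a local calculus argument: it substitutes $\inspecInsty_2=(\searchCap-\numPeople_1\inspecInsty_1)/\numPeople_2$, uses the location-family assumption together with the police's crime-rate-equalization condition to show that $\outOptPdf_1(\inspecInsty_1^*\margCostCrime_1)=\outOptPdf_2(\inspecInsty_2^*\margCostCrime_2)$ so that the first derivative of total crime vanishes at $\theta^{SB}$, and then signs the second derivative via convexity (concavity) of $\outOptCdf$ to conclude a \emph{local} minimum (maximum). Your change of variables $y_g=\inspecInsty_g\margCostCrime-\mu_g$ turns the capacity constraint into the linear constraint $\sum_g \numPeople_g y_g = K$ and makes the comparison a one-line Jensen inequality; since the police equilibrium condition forces $y_1^{SB}=y_2^{SB}=K/N$, the second-best attains the Jensen bound, which yields the \emph{global} extremum claimed in the theorem statement (something the paper's written proof does not literally deliver), and it extends verbatim to any number of groups. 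Your reduction of the comparison set (fix $\beta^{SB}$, note that $\CFPR$ and $\CTPR$ depend only on $\beta$, and that the crime rate depends on $\beta$ only through $\margCostCrime$) matches the paper's intended reading of ``all search intensities given the optimal signal thresholds.'' Two small caveats, neither of which affects validity: the aside that a common threshold is, up to null sets, the \emph{only} $\beta$ equalizing $\CFPR$ and $\CTPR$ is not true (any common $\beta$, threshold or not, equalizes them; equalization is just two scalar constraints), but you do not use it, since you correctly note interchangeability of all policies with the second-best $\TPR$/$\FPR$ values; and the ``equality iff $y_1=y_2$'' step requires strict convexity, but you only need the direction that $\theta^{SB}$ achieves the bound $H(K/N)$, together with strict monotonicity of $H$ to pass from equal crime rates to $y_1^{SB}=y_2^{SB}$ --- the same implicit strict-monotonicity step the paper itself uses.
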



\section{Heterogeneous Signal Structure}
\label{sec:hetsignals}
We now examine the extent to which the conclusion of Theorem \ref{thm:opt} holds if we allow the signal structure $\sigStruct_g = (\crimeSigCDF_g,\noncrimeSigCDF_g)$ to be different across groups $g \in \groups$.
The signal structure $ \sigStruct_g $ and the  strategy $ \probPolicy_g(\sigVal) $ matter to the extent they discourage crime. Recall, that the relevant sufficient statistic of a strategy $\probPolicy_g(\sigVal)$ is what we called the \emph{disincentive} to commit crime: \[\margCostCrime_g = \TPR_g - \FPR_g = \int_{\sigSet} (\crimeSigPDF(\sigVal)-\noncrimeSigPDF(\sigVal))\probPolicy_g(\sigVal)d\sigVal.\]

The set of achievable disincentives given a signal structure is $ [\underline \Delta_g, \overline \Delta_g] $ where
\begin{align*}
    &\underline \Delta_g = \int_{S^-_g} (\crimeSigPDF_g(\sigVal)-\noncrimeSigPDF_g(\sigVal))ds,
    \;\text{where } S^-_g \equiv \{s: \crimeSigPDF_g(\sigVal)-\noncrimeSigPDF_g(\sigVal)<0\},\\
    & \overline \Delta_g = \int_{S^+}(\crimeSigPDF_g(\sigVal)-\noncrimeSigPDF_g(\sigVal))ds,
    \;\text{where } S^+_g \equiv  \{s: \crimeSigPDF_g(\sigVal)-\noncrimeSigPDF_g(\sigVal)>0\}.
\end{align*}
The relevant sufficient statistics for the signal structure $ (\crimeSigCDF_g,\noncrimeSigCDF_g) $ for group $g$ are its minimal and maximal disincentive $ \underline \Delta_g $ and $ \overline \Delta_g $ which determines the range of disincentives a classification rule is able to provide.

\newcommand{\sigVar}{\sigma}

\subsection{General Analysis}
In this section, we  give some insight into what happens when the signal structure varies across groups without making further assumptions on how the signal structure varies. First, as should be clear from the intuition previously, what really matters for our results in the baseline model is not that the signal distributions are identical across populations, but rather that the maximal disincentive $\overline \margCostCrime_g$ is the same across groups: if we have this, then the basic insight of Theorem \ref{thm:opt} holds as before (maximizing disincentives across groups). This is summarized in Theorem \ref{thm:het_eqInc}. Note that since the signal structures are different, the implication of Theorem \ref{thm:opt}, i.e. that FPR/ FNR will also be equalized across groups, will not hold in general.  Further, if the maximal disincentive differs, then in general the result does not hold, as we show in Example \ref{ex:het_eqInc}. Theorem \ref{thm:het_lem_comparison} then provides conditions under which various ``natural'' fair policies are ranked under the adjudicator's objective to minimize overall crime.

Let us start with analyzing the optimal policy that minimizes  average crime. As in Section \ref{sec:baseline},  average crime is minimized by maximizing the disincentive for crime in each group, which is attained by setting $ \probPolicy_g(\sigVal) $ such that $ \Delta_g = \overline \Delta_g $ for every $g$. Unlike in Section \ref{sec:baseline}, the optimal policy does not guarantee any of the fairness notions described in section \ref{fairmeasures} --- equalizing disincentives, equalizing false positive rates, equalizing false negative rates or equalizing positive predictive value --- when the signal structures differ across the groups.

An immediate observation is that when the signal structures have the same maximal disincentives $ \overline \Delta_g $, then the optimal effective policy equalizes disincentives.

\begin{restatable}{theorem}{thmhetEqInc}
\label{thm:het_eqInc}
    Suppose that $ \overline \Delta_1 = \overline \Delta_2 $. The adjudicator's optimal policy (i.e. the solution to \eqref{eqn:obj}) equalizes disincentives \eqref{eqn:disincentive2} across groups.
\end{restatable}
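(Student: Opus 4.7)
The plan is to mimic the decomposition argument from the proof of Theorem \ref{thm:opt} and exploit the fact that the objective in \eqref{eqn:obj} separates across groups. Since each $\beta_g$ can be chosen independently of $\beta_{g'}$, the adjudicator's problem reduces to $|\groups|$ per-group subproblems
\[
\min_{\beta_g} N_g \outOptCdf_g(\margCostCrime_g) \quad \text{where} \quad \margCostCrime_g = \int_{\sigSet}(\crimeSigPDF_g(\sigVal) - \noncrimeSigPDF_g(\sigVal))\beta_g(\sigVal)\, d\sigVal.
\]
Because $\outOptCdf_g$ is a survivor function, it is non-increasing, so minimizing $\outOptCdf_g(\margCostCrime_g)$ amounts to maximizing $\margCostCrime_g$.

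Next I would verify that the maximum of $\margCostCrime_g$ over admissible $\beta_g \colon \sigSet \to [0,1]$ equals $\overline{\margCostCrime}_g$ as defined in the excerpt, and that it is attained by the pointwise maximizer
\[
\beta_g^\star(\sigVal) = \mathbbm{1}[\crimeSigPDF_g(\sigVal) \geq \noncrimeSigPDF_g(\sigVal)],
\]
since this selects exactly the set $S^+_g$ (modulo the measure-zero set where the densities are equal, which contributes nothing). This is the same pointwise argument used in Theorem \ref{thm:opt}, now carried out group by group with the group-specific densities.

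Finally, invoking the hypothesis $\overline{\margCostCrime}_1 = \overline{\margCostCrime}_2$ closes the proof: at $\beta^\star = (\beta_1^\star, \beta_2^\star)$ we have $\margCostCrime_1 = \overline{\margCostCrime}_1 = \overline{\margCostCrime}_2 = \margCostCrime_2$, so the optimal policy equalizes disincentives across groups, which is exactly notion \eqref{eqn:disincentive2}.

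There is no substantive obstacle here; the only thing to be careful about is that, unlike in the baseline model, the FPR and FNR will no longer be equalized at $\beta^\star$ because $\crimeSigPDF_g, \noncrimeSigPDF_g$ differ across groups, so $\FPR_g = \int \noncrimeSigPDF_g \beta_g^\star$ and $\FNR_g = 1 - \int \crimeSigPDF_g \beta_g^\star$ depend on the group-specific distributions. That is why the theorem only claims equal disincentives and not equal error rates, and the proof should mention this explicitly to contrast with Theorem \ref{thm:opt}.
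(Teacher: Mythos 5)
Your proposal is correct and follows exactly the paper's route: the paper's proof is a one-line appeal to the fact (established just before the theorem, mirroring Theorem \ref{thm:opt}) that the optimum sets $\margCostCrime_g = \overline{\margCostCrime}_g$ for each group, which is precisely the per-group decomposition and pointwise-maximization argument you spell out. The only nitpick is that the set where $\crimeSigPDF_g = \noncrimeSigPDF_g$ need not have measure zero, but as you note it contributes nothing to $\margCostCrime_g$, so the conclusion is unaffected.
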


Theorem \ref{thm:het_eqInc} follows from the core insight of Theorem \ref{thm:opt} that the optimal rule maximizes the disincentive to commit crime for each group. When the signal structures across the groups are identical as in Theorem \ref{thm:opt}, the signal structures have the same maximal disincentives, and therefore, the optimal policy equalizes disincentives. Equalizing disincentives coincides with equalizing false positive rates and equalizing false negative rates in this case. However when the distributions of signals are different, equalizing disincentives need not be be the same as equalizing false positive/ false negative rates.  Indeed, equalizing disincentives may yield a strictly lower crime rate than equalizing false positive rates and equalizing false negative rates even when $ \overline \Delta_1=\overline \Delta_2 $.  To see this, consider the following example.


\begin{example}\label{ex:het_eqInc}
    Suppose that the signal for group $ g $, $ s_g $, is generated according to
    \[
        s_g = \eta_g + 1_{c}
    \]
    where $\eta_g $ is a random variable that has pdf $ f_g(\eta) $ that is strictly log-concave and has full support on $ \mathbb{R} $, and $ 1_{c} $ is an indicator function that equals $ 1 $ if and only if the agent has committed a crime. In words, committing a crime produces a signal that exceeds the signal from not committing a crime by at least $ 1 $ on average, while the underlying distribution $ f_g $ may differ across the groups. Note that
    \begin{align*}
        \noncrimeSigPDF_g(\sigVal) = f_g(\sigVal)\quad\text{and}\quad
        \crimeSigPDF_g(\sigVal) = f_g(\sigVal-1)
    \end{align*}
    and the strict log-concavity guarantees that the strict Monotone Likelihood Ratio Property between $ \noncrimeSigPDF $ and $ \crimeSigPDF $ is satisfied, that is, $ \frac{\crimeSigPDF_g(s)}{\noncrimeSigPDF_g(s)} $ strictly increases in $ s $, and that $ f_g $ is unimodal. Finally, suppose that $ f_1(\eta) $ is asymmetric around its mode, and that $ f_2(\eta) = f_1(-\eta)$ is a horizontal reflection of $ f_1 $.

    For each threshold $ T_g $, the corresponding disincentives satisfy $ \Delta_g(T_g) = F_g(T_g) - F_g(T_g-1) $. The threshold $ T_g^* $ maximizes $ \Delta_g(T_g) $ if and only if it equalizes the pdfs at $ T_g^* $ and $ T_g^*-1 $:
    \[
        f_g(T_g^*)=f_g(T_g^*-1).
    \]
    Graphically, $ T_g^* $ and $ T_g^*-1 $ are obtained as a pair of intersection points between the pdf $ f_g $ and a horizontal line, where the distance between the intersection points has to be $ 1 $ as in Figure \ref{fig:mirrored_sig_distr}. The maximal disincentive $ \overline\Delta_g = \Delta_g(T_g^*) $ is the white area under $ f_g $ between $ T_g^*-1 $ and $ T_g^* $. Since $ f_2 $ is merely a horizontal reflection of $ f_1 $, so is the maximal disincentives, and therefore, $ \overline \Delta_1 = \overline \Delta_2 $. By Theorem \ref{thm:het_eqInc}, the optimal policy $ T_g^* $ equalizes disincentives.

    The false positive rate and false negative rate for each group are colored in blue and red. Clearly, $ FPR_1 \neq FPR_2 $ and $ FNR_1\neq FNR_2 $. Consequently, equalizing false positive rates and equalizing false negative rates yield strictly higher crime rates than equalizing disincentives. This also implies $ PPV_1 \neq PPV_2 $ so that equalizing PPVs also yields a strictly higher crime rate in general.




\end{example}

\begin{figure}[h]
		\centering
		\resizebox{0.27\columnwidth}{!}{%
            \begin{tikzpicture}[font=\sffamily,
declare function={Gauss(\x,\y,\z,\u)=0.8 * 1/(\z*sqrt(2*pi))*exp(-((\x-\y+\u*(\x-\y)*sign(\x-\y))^2)/(2*\z^2));},
every pin edge/.style={latex-,line width=1.5pt},
every pin/.style={fill=yellow!50,rectangle,rounded corners=3pt,font=\small}]
\begin{axis}[
    every axis plot post/.append style={
    mark=none,
    samples=101},
    no markers,
    clip=false,
    axis y line=none,
    axis x line*=bottom,
    ymin=0,
    xtick={-31,-15,15,31},
	xticklabels={$\sigThresh^*_1-1$,$\sigThresh^*_2-1$, $\sigThresh^*_1$,$\sigThresh^*_2$},
	tick label style={font=\tiny},
	legend style={at={(0.5,-0.1)},anchor=north},
 	legend columns=2 	]
	\tikzset{
        hatch distance/.store in=\hatchdistance,
        hatch distance=10pt,
        hatch thickness/.store in=\hatchthickness,
        hatch thickness=2pt
    }

    \makeatletter
    \pgfdeclarepatternformonly[\hatchdistance,\hatchthickness]{flexible hatch}
    {\pgfqpoint{0pt}{0pt}}
    {\pgfqpoint{\hatchdistance}{\hatchdistance}}
    {\pgfpoint{\hatchdistance-1pt}{\hatchdistance-1pt}}%
    {
        \pgfsetcolor{\tikz@pattern@color}
        \pgfsetlinewidth{\hatchthickness}
        \pgfpathmoveto{\pgfqpoint{0pt}{0pt}}
        \pgfpathlineto{\pgfqpoint{\hatchdistance}{\hatchdistance}}
        \pgfusepath{stroke}
    }
    \makeatother
    \addplot[name path = A, draw=none,domain=-40:40,forget plot] {0};

          \addplot[name path = B2, line width=1.5pt,black,domain=-40:40] {Gauss(-x,20,10,.5)};
    \addlegendentry{$f_1$}    
    \addplot[dashed, name path = B1, line width=1.5pt,black,domain=-40:40] {Gauss(x,20,10,.5)};
    \addlegendentry{$f_2$}

          \addplot[red!50, opacity=0.5] fill between[of=A and B2, soft clip={domain=-40:-31}];
	                          \addlegendentry{$\FNR_1$}
	                          
	    \addplot[draw,pattern=flexible hatch, pattern color=red,hatch distance=3pt,hatch thickness=0.4pt,
        area legend] fill between[of=A and B1, soft clip={domain=-40:-15}] \closedcycle;
                          \addlegendentry{$\FNR_2$}
                          
               	   \addplot[blue!50, opacity=0.5,area legend] fill between[of=A and B2, soft clip={domain=15:40}] \closedcycle;
                  \addlegendentry{$\FPR_1$}

     \addplot[mark=none,
        domain=-10:10,
        samples=100,
        pattern=flexible hatch,
        hatch distance=3.1pt,
        hatch thickness=0.41pt,
        draw=blue,
        pattern color=blue,
        area legend] fill between[of=A and B1, soft clip={domain=31:40}] \closedcycle;
        \addlegendentry{$\FPR_2$}

      \addplot[black,domain=-40:40,forget plot] {0};

\end{axis}
\end{tikzpicture}
        }
          \caption{}
        \label{fig:mirrored_sig_distr}
\end{figure}

When the signal structures across the groups have different maximal disincentives, we identify conditions under which both equalizing false positive rates and equalizing false negative rates yields a strictly lower crime rate than equalizing disincentives. Without loss of generality, let us assume that $ \overline\Delta_2 > \overline \Delta_1 $.

\begin{restatable}{theorem}{thmHetLemComparison}
\label{thm:het_lem_comparison}
    Let that $ \overline \margCostCrime_2 > \overline \margCostCrime_1 $.  Then, the following are equivalent:
    \begin{enumerate}
        \item The optimal policy subject to equalizing false positive rates ($\probPolicy_{\FPR}^\star$) attains a (weakly) lower crime rate than equalizing disincentives ($\probPolicy_{\margCostCrime}^\star$).
        \item The optimal policy subject to equalizing false negative rates ($\probPolicy_{\FNR}^\star$) attains a (weakly) crime rate than equalizing disincentives ($\probPolicy_{\margCostCrime}^\star$).
        \item $ (\crimeSigCDF_{2})^{-1} \circ \crimeSigCDF_1 (\sigThresh_1^*) > (\geq) (\noncrimeSigCDF_2)^{-1} \circ \noncrimeSigCDF_1(\sigThresh_1^*) $  where $ \sigThresh_g^* $ is the threshold under the optimal policy for group $ g $.
    \end{enumerate}
\end{restatable}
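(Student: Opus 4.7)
Because $\overline{\margCostCrime}_2 > \overline{\margCostCrime}_1$ and each $\outOptCdf_g$ is non-increasing, the optimal disincentive-equalizing policy sets the common disincentive as high as feasible, namely $\margCostCrime_1 = \margCostCrime_2 = \overline{\margCostCrime}_1$; Group~1 achieves this only at its individually optimal threshold $\sigThresh_1^*$, and the resulting crime rate is $\numPeople_1 \outOptCdf_1(\overline{\margCostCrime}_1) + \numPeople_2 \outOptCdf_2(\overline{\margCostCrime}_1)$. The plan is to compare this benchmark against the FPR- and FNR-equalizing optima by first inspecting, in each case, the candidate policy that still uses $\sigThresh_1^*$ for Group~1, so the whole comparison reduces to Group~2's disincentive.

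\textbf{Key algebraic identity and forward direction.} With $\sigThresh_1 = \sigThresh_1^*$, the FPR-equalizing constraint forces Group~2's threshold to $\sigThresh_2^i := (\noncrimeSigCDF_2)^{-1}(\noncrimeSigCDF_1(\sigThresh_1^*))$, while the FNR-equalizing constraint forces $\sigThresh_2^c := (\crimeSigCDF_2)^{-1}(\crimeSigCDF_1(\sigThresh_1^*))$. Substituting into $\margCostCrime_2(\sigThresh_2) = \noncrimeSigCDF_2(\sigThresh_2) - \crimeSigCDF_2(\sigThresh_2)$ and using $\overline{\margCostCrime}_1 = \noncrimeSigCDF_1(\sigThresh_1^*) - \crimeSigCDF_1(\sigThresh_1^*)$ gives
\[
    \margCostCrime_2(\sigThresh_2^i) - \overline{\margCostCrime}_1 = \crimeSigCDF_1(\sigThresh_1^*) - \crimeSigCDF_2(\sigThresh_2^i), \qquad \margCostCrime_2(\sigThresh_2^c) - \overline{\margCostCrime}_1 = \noncrimeSigCDF_2(\sigThresh_2^c) - \noncrimeSigCDF_1(\sigThresh_1^*).
\]
Monotonicity of $\crimeSigCDF_2$ and $\noncrimeSigCDF_2$ makes each right-hand side (strictly) non-negative iff $\sigThresh_2^c \geq (>) \sigThresh_2^i$, which is condition~(3). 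Because $\outOptCdf_2$ is non-increasing, this already yields the forward implications $(3) \Rightarrow (1), (2)$: the FPR- and FNR-equalizing candidates with $\sigThresh_1 = \sigThresh_1^*$ weakly beat the disincentive-equalizing benchmark, and hence the corresponding constrained optima do too (strictly when the inequality in (3) is strict and $\outOptCdf_2$ is strictly decreasing at the relevant point).

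\textbf{Reverse direction and main obstacle.} For the converse, I would parametrize FPR-equalizing threshold policies by a single scalar (say the common FPR level $u$); under MLRP, both $\margCostCrime_1$ and $\margCostCrime_2$ are then unimodal functions of $u$ with peaks at the values corresponding to $\sigThresh_1^*$ and $\sigThresh_2^*$ respectively. The claim needed is that when $\sigThresh_2^c < \sigThresh_2^i$, no value of $u$ produces a total crime rate strictly below the disincentive-equalizing benchmark. The main obstacle is controlling the tradeoff by which moving $u$ away from its Group-1-optimal value strictly hurts Group~1 through $\outOptCdf_1$ but can raise Group~2's disincentive via $\outOptCdf_2$; the plan is to use the uniqueness of $\sigThresh_1^*$ as the $\margCostCrime_1$-maximizer together with MLRP to show that such a tradeoff is never favorable precisely when (3) fails, so that any improving choice of $u$ must already be available at the Group-1-optimal parameter, returning us to the forward analysis. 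The FNR-equalizing case is treated symmetrically.
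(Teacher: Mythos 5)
Your forward direction is correct and is essentially the paper's easy direction: pinning $\sigThresh_1=\sigThresh_1^*$, the identities $\margCostCrime_2(\sigThresh_2^i)-\overline\margCostCrime_1=\crimeSigCDF_1(\sigThresh_1^*)-\crimeSigCDF_2(\sigThresh_2^i)$ and $\margCostCrime_2(\sigThresh_2^c)-\overline\margCostCrime_1=\noncrimeSigCDF_2(\sigThresh_2^c)-\noncrimeSigCDF_1(\sigThresh_1^*)$ are exactly how the paper converts condition (3) into dominance of group~2's disincentive, and monotonicity of $\outOptCdf_2$ finishes the implication $(3)\Rightarrow(1),(2)$.

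The reverse direction, however, is a genuine gap, and the claim you plan to establish --- that when $\sigThresh_2^c<\sigThresh_2^i$ no common-$\FPR$ level $u$ yields total crime strictly below the disincentive-equalizing benchmark for the \emph{given} $\outOptCdf_1,\outOptCdf_2$ --- is false. Take $\outOptCdf_1$ constant on the achievable range $[\underline\margCostCrime_1,\overline\margCostCrime_1]$ and $\outOptCdf_2$ strictly decreasing: the $\FPR$-constrained optimum can then move $u$ so that group~2 sits at its own optimal threshold (disincentive $\overline\margCostCrime_2>\overline\margCostCrime_1$) at zero cost to group~1, strictly beating the benchmark whether or not condition (3) holds. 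No appeal to MLRP or to uniqueness of $\sigThresh_1^*$ can rule this out, because the failure mode is governed by the relative slopes of $\outOptCdf_1,\outOptCdf_2$ and by $\numPeople_1,\numPeople_2$, none of which appear in condition (3). The theorem is implicitly quantified over the outside-option distributions, and this is where the paper's proof differs from your plan: its auxiliary lemma (Lemma~\ref{het_lemfprfnr}) shows that $\probPolicy_{\FPR}^\star$ beats $\probPolicy_{\margCostCrime}^\star$ \emph{for all} $(\outOptCdf_g)_g$ if and only if the induced disincentives dominate pointwise, $\margCostCrime_g^{\FPR}\ge\margCostCrime_g^{\margCostCrime}$ for every $g$; the ``only if'' part is proved by constructing adversarial survivor functions (very steep $\outOptCdf_1$ between $\margCostCrime_1^{\FPR}$ and $\overline\margCostCrime_1$, nearly flat $\outOptCdf_2$) that punish any drop in group~1's disincentive. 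Given that lemma, dominance for group~1 forces $\margCostCrime_1^{\FPR}=\overline\margCostCrime_1$, hence $\sigThresh_1^{\FPR}=\sigThresh_1^*$, and the remaining algebra is exactly your forward computation. What is missing from your proposal is this quantifier together with the adversarial construction; under the fixed-$(\outOptCdf_1,\outOptCdf_2)$ reading, the implication $(1)\Rightarrow(3)$ you are trying to prove simply does not hold.
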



In general, the optimal policy does not guarantee any of the fairness notions. Theorem \ref{thm:het_lem_comparison} provides a sufficient and necessary condition under which equalizing false positive rates and equalizing false negative rates attains lower crime rates than equalizing disincentives. However, condition 3 is hard to interpret. Further, it is unclear which of equalizing false positive rate and false negative rate would be better overall for the adjudicator. Without additional structure on the signal structures, it is hard to proceed further. To explore these issues, we restrict attention to signal distributions that are members of location-scale families of distributions.

\subsection{Location Scale Families}

\begin{definition}\label{def:location-scale}
We say that the signal structure is from a location-scale family if each group's signal is
a location-scale transformation of the same underlying random variable $ \eta $ that has absolutely continuous and log-concave density function $ f $ with full support on the real line. Specifically, the signal $ \sigVal_g $ for group $g$ is generated according to:
\begin{equation*}
    \sigVal_g = \mu_g + \sigma_g \eta + m_g 1_c
\end{equation*}
where $ \mu_g $ is a location shifter, $ \sigma_g$ is a scale shifter, $ m_g$ is the marginal effect of crime on the signal and $ 1_c $ is an indicator function that equals $ 1 $ if and only if the agent has committed a crime. Equivalently, the conditional pdfs of signal $ s $ for group $ g $ conditioning on being innocent and having committed a crime are
\begin{equation}
    \begin{aligned}
    &\noncrimeSigPDF(\sigVal) = f\left(\frac{\sigVal-\mu_g}{\sigma_g}\right) \quad \text{and}\quad\crimeSigPDF(\sigVal) = f\left(\frac{\sigVal-\mu_g-m_g}{\sigma_g}\right).
    \end{aligned}
    \label{hetSig}
\end{equation}
\end{definition}


Note that the underlying distribution $ f $ is identical across the groups as in contrast to Example \ref{ex:het_eqInc} where the underlying distribution $ f_g $ differed across the groups. Combined with the functional form \eqref{hetSig}, log-concavity of $ f $ is equivalent to the signal structure satisfying the monotone likelihood ratio property for each group $g$ which implies that the optimal $ \beta $ is a threshold strategy. The log-concavity of $ f $ also implies that $ f $ is unimodal, which guarantees the uniqueness of the threshold that attains the optimal policy. There are many natural location-scale families of distributions satisfying log-concavity including normal distributions, logistic distributions, and extreme value distributions.

A property that makes location-scale family particularly tractable is that the disincentives engendered by a threshold depend only on $ \frac{m_g}{\sigma_g} $, i.e. is the ratio between the scale shift $\sigma_g$ and the marginal effect of crime $ m_g $. For this class of distributions, we can say that it is \emph{always} preferable to equalize either false positive rates or false negative rates compared to equalizing disincentives. 

\begin{restatable}{theorem}{thmscale}
\label{scale}
Suppose the distributions across groups are from the location-scale family as defined in Definition
\ref{def:location-scale}. Then \begin{enumerate}
    \item If $\frac{m_{1}}{\sigma_{1}} = \frac{m_{2}}{\sigma_{2}}$, then the optimal policy equalizes disincentives, false positive rates and negative rates.
    \item Suppose $ \frac {m_1} {\sigma_1} \neq \frac {m_2} {\sigma_2}$, and assume $\frac {m_2} {\sigma_2}$ is larger without loss of generality. Then $ \overline \Delta_2 > \overline \Delta_1 $. Further, the optimal policy subject to equalizing false positive rates ($\probPolicy_{\FPR}^\star$) and equalizing false negative rates ($\probPolicy_{\FNR}^\star$)  attain strictly lower crime rates than equalizing disincentives ($\probPolicy_{\Delta}^\star$). Further, all three attain strictly higher crime rates than the optimal policy ($\probPolicy^*$).
\end{enumerate}
\end{restatable}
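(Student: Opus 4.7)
Our approach is a change of variables. For threshold $T_g$, we define $z_g = (T_g - \mu_g)/\sigma_g$ and $r_g = m_g/\sigma_g$; direct calculation with \eqref{hetSig} then gives $\Delta_g = F(z_g) - F(z_g - r_g)$, $\FPR_g = 1 - F(z_g)$, and $\FNR_g = F(z_g - r_g)$, where $F$ is the cdf of $\eta$. So $\mu_g$ and $\sigma_g$ wash out and every quantity of interest depends on the group only through $r_g$. The first-order condition for maximizing $\Delta_g$ in $z_g$ is $f(z) = f(z - r_g)$, which has a unique solution $z_g^*$ by log-concavity and unimodality of $f$ (the two arguments must lie on opposite sides of the mode). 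This yields $\overline\Delta_g = F(z_g^*) - F(z_g^* - r_g)$.

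Part 1 then follows immediately: when $r_1 = r_2 = r$, both groups share the same $z^*$, hence the same $\Delta$, $\FPR$, and $\FNR$ at the optimal threshold policy. For Part 2, we assume without loss of generality $r_2 > r_1$. The first step is to show $\overline\Delta_2 > \overline\Delta_1$, which follows by evaluating the group-$2$ objective at $z = z_1^*$: $\overline\Delta_2 \ge F(z_1^*) - F(z_1^* - r_2) > F(z_1^*) - F(z_1^* - r_1) = \overline\Delta_1$, with strict inequality from strict monotonicity of $F$. Since the unconstrained optimum decomposes across groups and attains $\Delta_g = \overline\Delta_g$, while any equalized-disincentive policy is capped at the common value $\overline\Delta_1$, and since $H_2$ is strictly decreasing on this range, $\probPolicy^*$ strictly dominates $\probPolicy_\Delta^\star$.

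The comparisons of $\probPolicy_{\FPR}^\star$ and $\probPolicy_{\FNR}^\star$ against $\probPolicy_\Delta^\star$ are handled by exhibiting a witness policy in each family. The equalized-FPR constraint reduces to $z_1 = z_2 = z$, and the equalized-FNR constraint to $z_2 = z_1 + (r_2 - r_1)$. In both cases, taking $z_1 = z_1^*$ yields $\Delta_1 = \overline\Delta_1$ and $\Delta_2 > \overline\Delta_1$ by strict monotonicity of $F$; this witness already strictly dominates $\probPolicy_\Delta^\star$, and so do the respective constrained optima. To show $\probPolicy^*$ strictly dominates $\probPolicy_{\FPR}^\star$ and $\probPolicy_{\FNR}^\star$, observe that each constraint leaves only one scalar free, so attaining $\Delta_g = \overline\Delta_g$ for both groups simultaneously requires $z_1^* = z_2^*$; but then $f(z) = f(z - r_1) = f(z - r_2)$ forces two distinct arguments of $f$ on the same side of the mode to share the same density, contradicting strict log-concavity of $f$. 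Hence at least one group sits strictly below its maximal disincentive under any such policy, yielding the strict gap. The main obstacle throughout will be this bookkeeping of strict inequalities: we must track where $f$, $F$, and $H_g$ are strictly monotone and rule out degenerate flat regions of $f$ that would collapse $z_1^*$ and $z_2^*$, so that all advertised strict inequalities really hold.
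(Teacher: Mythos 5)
Your proposal is correct, and for part (2) it takes a more self-contained route than the paper. Part (1) is essentially the paper's own computation (the paper works with $\sigThresh_g=\mu_g+\sigma_g T^*$; you work with the normalized coordinate $z_g=(\sigThresh_g-\mu_g)/\sigma_g$ and $r_g=m_g/\sigma_g$, which is the same calculation). For part (2), the paper does not argue directly: it verifies condition (3) of Theorem \ref{thm:het_lem_comparison} by computing $(\crimeSigCDF_2)^{-1}\circ\crimeSigCDF_1$ and $(\noncrimeSigCDF_2)^{-1}\circ\noncrimeSigCDF_1$ as affine maps and checking that the inequality reduces to $m_2/\sigma_2\geq m_1/\sigma_1$, it merely asserts $\overline\Delta_2>\overline\Delta_1$, and the final claim that all three constrained optima are strictly worse than $\probPolicy^*$ is not proved in the appendix. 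You instead inline the comparison with explicit witnesses: your $z_2=z_1^*$ (for $\FPR$) and $z_2=z_1^*+(r_2-r_1)$ (for $\FNR$) are exactly the thresholds $(\noncrimeSigCDF_2)^{-1}\circ\noncrimeSigCDF_1(\sigThresh_1^*)$ and $(\crimeSigCDF_2)^{-1}\circ\crimeSigCDF_1(\sigThresh_1^*)$ that appear inside the paper's proof of Theorem \ref{thm:het_lem_comparison}, so your argument buys independence from that theorem (and from Lemma \ref{het_lemfprfnr}, whose reduction to threshold policies you must then justify yourself, e.g.\ via the Neyman--Pearson observation that replacing any policy by the threshold with the same $\FPR$ or $\FNR$ weakly increases $\margCostCrime_g$), and it supplies the two pieces the paper leaves implicit, namely the proof of $\overline\Delta_2>\overline\Delta_1$ and of the strict gap to $\probPolicy^*$. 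Two small repairs: for the $\FNR$ constraint, attaining both maximal disincentives forces $z_2^*-r_2=z_1^*-r_1$, not $z_1^*=z_2^*$; the contradiction is then $f(z_1^*)=f(z_2^*)$ at two distinct points weakly to the right of the mode, mirror-image to your $\FPR$ argument. And, as you yourself flag, all advertised strict inequalities require $f$ strictly monotone off its mode (strict log-concavity, ruling out plateaus) and $\outOptCdf_g$ strictly decreasing on the relevant range; these are exactly the implicit assumptions under which the paper's statement (and its invocation of Lemma \ref{het_lemfprfnr}) is strict, so your proof sits at the same level of rigor as the published one.
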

The formal proof is in the appendix.
For some intuition, note that the maximum disincentive $ \overline \Delta_g $ is determined by and increasing in $ \frac{m_g}{\sigma_g} $. Intuitively, this is because the larger the normalized marginal effect of crime on the signal is, the better the adjudicator is able to distinguish between the criminal and the non-criminal based on the signal, and therefore the adjudicator can increase the disincentive to commit crime. If this term is equal across groups, then Theorem \ref{thm:het_eqInc} applies and part (1) follows as a corollary. Now, without loss of generality, suppose instead that $ \frac {m_1} {\sigma_1} < \frac {m_2} {\sigma_2} $. Then $ \overline \Delta_2 > \overline \Delta_1 $. It can also be verified that the condition $ (3) $ in Theorem  \ref{thm:het_lem_comparison} holds, so that equalizing false positive rates and equalizing false negative rates always yield a lower crime rate than equalizing disincentives. Furthermore, we can also verify that equalizing false positive rates and equalizing false negative rates never attains the crime rate under the optimal policy.


A natural question to ask is whether one of  equalizing false positive rates or equalizing false negative rates will have a lower crime rate than the other. It is a hard question to answer in general. When the underlying distribution $ f $ is symmetric around $0$ ($0$ is without loss of generality since the $\mu_g$'s can always be shifted if $f$ is symmetric around some other number), however, the set of feasible disincentives $ (\Delta_1, \Delta_2) $ are identical under equalizing false positive rates and false negative rates, and therefore, the two notions of fairness yield the same crime rate.

\begin{restatable}{theorem}{thmHetLocscaleFprfnrthesame}
\label{het_locscale_fprfnrthesame}
    Suppose the signal structure is from the location-scale family as in Definition \ref{def:location-scale}, and $ f $ is symmetric around $ 0 $. Then, the optimal policy subject to equalizing false positive rates and that subject to equalizing false negative rates yield the same crime rate.

\end{restatable}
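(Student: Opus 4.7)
The plan is to show that under the symmetry assumption on $f$, the set of feasible disincentive pairs $(\margCostCrime_1,\margCostCrime_2)$ achievable by threshold rules subject to equalizing FPRs is identical to the set achievable subject to equalizing FNRs. Since each group's crime rate $\outOptCdf_g(\margCostCrime_g)$ depends on the policy only through $\margCostCrime_g$, this will immediately imply that the two constrained minima of the total crime rate coincide. By Remark \ref{thresholdopt}, log-concavity of $f$ gives strict MLRP, so we may restrict attention to threshold rules without loss.

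First, I would reparametrize. For a threshold $\sigThresh_g$, let $u_g = (\sigThresh_g-\mu_g)/\sigma_g$ and $r_g = m_g/\sigma_g$. Then, using \eqref{hetSig},
\begin{align*}
    \FPR_g &= 1 - F(u_g), \qquad \FNR_g = F(u_g - r_g), \\
    \margCostCrime_g &= F(u_g) - F(u_g - r_g).
\end{align*}
Since $F$ is strictly increasing, equalizing FPRs is equivalent to $u_1 = u_2 = u$, while equalizing FNRs is equivalent to $u_1 - r_1 = u_2 - r_2$, i.e.\ $u_2 = u_1 + (r_2-r_1)$.

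Next, I would compute the two feasible sets. Under FPR equalization with common value $u$,
\begin{equation*}
    \mathcal{D}_{\FPR} = \{(F(u) - F(u-r_1),\, F(u) - F(u-r_2)) : u \in \mathbb{R}\}.
\end{equation*}
Under FNR equalization, write $u_1 = v$ and $u_2 = v + (r_2 - r_1)$, giving
\begin{equation*}
    \mathcal{D}_{\FNR} = \{(F(v) - F(v-r_1),\, F(v + r_2 - r_1) - F(v - r_1)) : v \in \mathbb{R}\}.
\end{equation*}

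The key step is to show $\mathcal{D}_{\FPR} = \mathcal{D}_{\FNR}$ using the symmetry $F(x) + F(-x) = 1$. In the FNR parametrization, set $w = r_1 - v$ (equivalently $v = r_1 - w$). Then
\begin{equation*}
    F(v) - F(v-r_1) = F(r_1 - w) - F(-w) = (1 - F(w - r_1)) - (1 - F(w)) = F(w) - F(w-r_1),
\end{equation*}
and analogously $F(v + r_2 - r_1) - F(v-r_1) = F(r_2 - w) - F(-w) = F(w) - F(w - r_2)$. Thus the map $v \mapsto w = r_1 - v$ is a bijection of $\mathbb{R}$ onto itself that carries the parameterization of $\mathcal{D}_{\FNR}$ exactly onto that of $\mathcal{D}_{\FPR}$, proving equality of the two sets.

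Finally, since the adjudicator's objective $\sum_g \numPeople_g \outOptCdf_g(\margCostCrime_g)$ depends on the policy only through $(\margCostCrime_1, \margCostCrime_2)$, minimizing it over $\beta \in \Beta_{\FPR}$ (restricted to threshold rules) and over $\beta \in \Beta_{\FNR}$ amounts to minimizing the same function over the same set $\mathcal{D}_{\FPR} = \mathcal{D}_{\FNR}$. Hence the two optimal crime rates coincide. The only subtlety is the reduction to threshold rules; this follows from Remark \ref{thresholdopt} and the fact that for a fixed achievable disincentive level, a threshold rule is always among the minimizers within each fairness class (any non-threshold rule achieving disincentive $\margCostCrime_g$ can be replaced by the threshold rule attaining the same $\margCostCrime_g$, which still lies in $\Beta_{\FPR}$ or $\Beta_{\FNR}$ after a corresponding adjustment for the other group).
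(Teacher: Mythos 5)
Your proposal is correct and follows essentially the same route as the paper: your change of variables $w = r_1 - v$ is exactly the paper's reflection of thresholds $\sigThresh_g' = 2\mu_g + m_g - \sigThresh_g$ written in standardized units, and both arguments conclude by showing that the feasible sets of disincentive pairs under the two fairness constraints coincide, so the constrained minima of $\sum_g \numPeople_g \outOptCdf_g(\margCostCrime_g)$ agree. Your explicit remark on reducing to threshold rules is a small additional care the paper leaves implicit, but it does not change the substance of the argument.
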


\section{Equalizing Crime Rates}\label{sec:equalbase}

In our model, the crime rates are endogenously determined by agents' decisions as a response to the policy implemented. This is unlike most of the fairness literature that assumes that the underlying rates are fixed. It motivates us to study another fairness measure that previous papers could not have asked for: equalizing crime rates.

To understand the implications of equalizing crime rates, let us assume that group $ 2 $ is `riskier' than group $ 1 $ without loss of generality. Specifically, assume that $\outOptCdf_2$ stochastically dominates $\outOptCdf_1$ -- that is to assume $\outOptCdf_2(\margCostCrime) \ge \outOptCdf_1(\margCostCrime)\ \forall \margCostCrime$.



In this section, we focus on the comparison between equalizing crime rates and equalizing disincentives, while allowing any arbitrary signals structures $\sigStruct_g = (\crimeSigCDF_g, \noncrimeSigCDF_g)$. Equalizing disincentives is an appropriate fairness measure to compare because it is attained by the optimal policy when $ \overline \margCostCrime_1 = \overline \margCostCrime_2 $.

The first question to ask is whether equalizing crime rates can ever attain a lower crime rate than equalizing disincentives. We find that equalizing crime rates attains a lower crime rate than equalizing disincentives if and only if $ \overline \Delta_2 $ is sufficiently larger than $ \overline \Delta_1 $.



\begin{restatable}{theorem}{thmEqBaseRatesEqIncen}
\label{eqBaseRates_eqIncen}
    Suppose that $ \outOptCdf_2 $ first-order stochastically domiantes $ \outOptCdf_1 $. Then, there is an $ \epsilon>0 $ such that equalizing crime rates attains a lower crime rate than equalizing disincentives if and only if $ \overline\Delta_2\geq \overline\Delta_1+\epsilon $.
\end{restatable}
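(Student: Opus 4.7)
The plan is to recast both policies as one-parameter minimizations and compare them directly, with a short case split on which feasibility constraint binds under equalizing crime rates. First, under equalizing disincentives, any feasible common disincentive satisfies $\Delta \leq \min\{\overline\margCostCrime_1, \overline\margCostCrime_2\}$, and since each $\outOptCdf_g$ is nonincreasing, the minimizer is $\Delta = \min\{\overline\margCostCrime_1,\overline\margCostCrime_2\}$; in the regime of interest this equals $\overline\margCostCrime_1$, giving total crime
\[
C_\margCostCrime \; := \; \numPeople_1 \outOptCdf_1(\overline\margCostCrime_1) + \numPeople_2 \outOptCdf_2(\overline\margCostCrime_1).
\]
Under equalizing crime rates, I parametrize by the common crime rate $c$ and set $\margCostCrime_g = \outOptCdf_g^{-1}(c)$; the feasibility constraints $\margCostCrime_g \leq \overline\margCostCrime_g$ translate to $c \geq \outOptCdf_g(\overline\margCostCrime_g)$ for each $g$, so the optimal common rate is $c^\star = \max\{\outOptCdf_1(\overline\margCostCrime_1), \outOptCdf_2(\overline\margCostCrime_2)\}$ and total crime is $C_\crimeRate := (\numPeople_1+\numPeople_2) c^\star$.

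Next, I compare the two objectives. Dividing by $\numPeople_1 + \numPeople_2$, the inequality $C_\crimeRate \leq C_\margCostCrime$ is equivalent to $c^\star \leq \bar c$, where
\[
\bar c \; := \; \frac{\numPeople_1 \outOptCdf_1(\overline\margCostCrime_1) + \numPeople_2 \outOptCdf_2(\overline\margCostCrime_1)}{\numPeople_1+\numPeople_2}
\]
is the population-weighted average of the two groups' survivor values at the common disincentive $\overline\margCostCrime_1$. First-order stochastic dominance gives $\outOptCdf_1(\overline\margCostCrime_1) \leq \bar c \leq \outOptCdf_2(\overline\margCostCrime_1)$, with strict inequalities under strict FOSD at $\overline\margCostCrime_1$ (a tacit hypothesis, since otherwise the two groups are indistinguishable and no positive $\epsilon$ can exist). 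Define $\epsilon := \outOptCdf_2^{-1}(\bar c) - \overline\margCostCrime_1$, which is strictly positive precisely because $\bar c < \outOptCdf_2(\overline\margCostCrime_1)$.

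To close the "if and only if," I handle the two directions in one stroke using monotonicity of $\outOptCdf_2$. If $\overline\margCostCrime_2 \geq \overline\margCostCrime_1 + \epsilon$, then $\outOptCdf_2(\overline\margCostCrime_2) \leq \outOptCdf_2(\overline\margCostCrime_1 + \epsilon) = \bar c$; combined with $\outOptCdf_1(\overline\margCostCrime_1) \leq \bar c$, this yields $c^\star \leq \bar c$, hence $C_\crimeRate \leq C_\margCostCrime$. Conversely, if $\overline\margCostCrime_2 < \overline\margCostCrime_1 + \epsilon$, then $\outOptCdf_2(\overline\margCostCrime_2) > \bar c \geq \outOptCdf_1(\overline\margCostCrime_1)$, so $c^\star = \outOptCdf_2(\overline\margCostCrime_2) > \bar c$, giving $C_\crimeRate > C_\margCostCrime$.

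The main obstacle is not technical but bookkeeping: one must keep straight which of $\outOptCdf_1(\overline\margCostCrime_1)$ and $\outOptCdf_2(\overline\margCostCrime_2)$ attains the max defining $c^\star$ across the two regimes of $\overline\margCostCrime_2$, and verify that the threshold $\overline\margCostCrime_1 + \epsilon$ is exactly the crossover point. Invertibility of $\outOptCdf_2$ (or a left-inverse) on the relevant range is implicitly needed to define $\epsilon$; degenerate cases where $\outOptCdf_1 \equiv \outOptCdf_2$ collapse $\epsilon$ to $0$, consistent with the statement requiring meaningful stochastic dominance.
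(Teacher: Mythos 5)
Your construction of $\epsilon$ is exactly the paper's: the paper defines $\epsilon$ implicitly by $\numPeople_1\bigl(\outOptCdf_2(\overline{\margCostCrime}_1+\epsilon)-\outOptCdf_1(\overline{\margCostCrime}_1)\bigr)+\numPeople_2\bigl(\outOptCdf_2(\overline{\margCostCrime}_1+\epsilon)-\outOptCdf_2(\overline{\margCostCrime}_1)\bigr)=0$, which is precisely your $\outOptCdf_2(\overline{\margCostCrime}_1+\epsilon)=\bar c$, and your observation that the best crime-rate-equalizing policy has common rate $c^\star=\max\{\outOptCdf_1(\overline{\margCostCrime}_1),\outOptCdf_2(\overline{\margCostCrime}_2)\}$ compresses the paper's cases (i)--(iii) into one monotonicity argument. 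However, there is a genuine gap in your ``only if'' direction: you fix $C_{\margCostCrime}=\numPeople_1\outOptCdf_1(\overline{\margCostCrime}_1)+\numPeople_2\outOptCdf_2(\overline{\margCostCrime}_1)$ by declaring that ``in the regime of interest'' $\min\{\overline{\margCostCrime}_1,\overline{\margCostCrime}_2\}=\overline{\margCostCrime}_1$, but the converse must also cover $\overline{\margCostCrime}_2<\overline{\margCostCrime}_1$ (the paper's case (iv)). There the best disincentive-equalizing policy uses common disincentive $\overline{\margCostCrime}_2$, so the true benchmark is $C_{\margCostCrime}=\numPeople_1\outOptCdf_1(\overline{\margCostCrime}_2)+\numPeople_2\outOptCdf_2(\overline{\margCostCrime}_2)\ \ge\ (\numPeople_1+\numPeople_2)\bar c$, and your inequality $C_{\crimeRate}>(\numPeople_1+\numPeople_2)\bar c$ no longer implies $C_{\crimeRate}\ge C_{\margCostCrime}$. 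The conclusion is still true, but it needs a separate (easy) argument: in that regime $c^\star=\outOptCdf_2(\overline{\margCostCrime}_2)$, hence
\begin{equation*}
C_{\crimeRate}-C_{\margCostCrime}=\numPeople_1\bigl(\outOptCdf_2(\overline{\margCostCrime}_2)-\outOptCdf_1(\overline{\margCostCrime}_2)\bigr)\ \ge\ 0
\end{equation*}
directly from first-order stochastic dominance, which is how the paper disposes of cases (iii) and (iv).

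A secondary point, which you partly flag yourself: the strict inequalities you invoke ($\bar c<\outOptCdf_2(\overline{\margCostCrime}_1)$ to get $\epsilon>0$, and $\outOptCdf_2(\overline{\margCostCrime}_2)>\bar c$ when $\overline{\margCostCrime}_2<\overline{\margCostCrime}_1+\epsilon$) require strict dominance at the relevant point and strict monotonicity of $\outOptCdf_2$ on $[\overline{\margCostCrime}_2,\overline{\margCostCrime}_1+\epsilon]$; the paper is equally loose here, so this is a shared imprecision rather than a defect of your route, but it should be stated as a hypothesis rather than left tacit.
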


\begin{figure}
\centering
		\resizebox{0.4\columnwidth}{!}{%
			\begin{tikzpicture}
			\begin{axis}[%
			xlabel=$\margCostCrime$,
			ylabel=$\outOptCdf_g(\margCostCrime_g)$,
			legend style={at={(0.5,-0.1)},anchor=north},
 			legend columns=2
			]
			\addplot [smooth, red] {normcdf(x,2,2)};
			\addlegendentry{$\outOptCdf_1$}
			\addplot [smooth, blue] {normcdf(x,0,2)};
			\addlegendentry{$\outOptCdf_2$}

			\addplot [only marks, mark = triangle*, mark options={rotate=90},mark size=5pt, blue, forget plot] coordinates {(4, {normcdf(4,0,2)})};
			\addplot [only marks, mark = triangle*, mark options={rotate=90},mark size=5pt, red, forget plot] coordinates {(0, {normcdf(0,2,2)})};
			
			\addplot [only marks,mark = ,mark options={fill=white}, mark size = 2pt] coordinates {(0, {normcdf(0,0,2)})};
			\addlegendentry{Equalizing Incentives}
			\addplot [only marks,mark = +,mark options={fill=black, rotate=45}, mark size = 8pt] coordinates {(2.0, {normcdf(2.0,0,2)})};
			\addlegendentry{Equalizing Crime Rates}
			
			\addplot [only marks,mark = ,mark options={fill=white}, mark size = 2pt] coordinates {(0, {normcdf(0,2,2)})};
			\addplot [only marks,mark = +,mark options={fill=black, rotate=45}, mark size = 8pt] coordinates {(0, {normcdf(0,2,2)})};

			\addplot [very thick, red, domain=-4:0] {normcdf(x,2,2)};
			\addplot [very thick, blue, domain=-2.5:4] {normcdf(x,0,2)};
			
			\addplot [thick, orange] {normcdf(0,2,2)};
			
			\addplot[thick, black] coordinates {(0,0)(0,1)};


			\end{axis}
			\end{tikzpicture}
			}
			\caption{$ \overline \margCostCrime_1 + \epsilon < \overline \margCostCrime_2 $}
			\label{eqBase_1}
\end{figure}

\begin{figure}
\centering
\begin{subfigure}{0.4\columnwidth}
    		\resizebox{1.0\columnwidth}{!}{%
	\begin{tikzpicture}
	\begin{axis}[%
	xlabel=$\margCostCrime$,
	ylabel=$\outOptCdf_g(\margCostCrime_g)$,
	legend entries={$\outOptCdf_1$, $\outOptCdf_2$},
 	legend pos=south west
	]
	\addplot [smooth, red] {normcdf(x,2,2)};
			\addlegendentry{$\outOptCdf_1$}
			\addplot [smooth, blue] {normcdf(x,0,2)};
			\addlegendentry{$\outOptCdf_2$}

			\addplot [only marks, mark = triangle*, mark options={rotate=90},mark size=5pt, blue, forget plot] coordinates {(1.25, {normcdf(1.25,0,2)})};
			\addplot [only marks, mark = triangle*, mark options={rotate=90},mark size=5pt, red, forget plot] coordinates {(0, {normcdf(0,2,2)})};
			
			\addplot [only marks,mark = ,mark options={fill=white}, mark size = 2pt] coordinates {(0, {normcdf(0,0,2)})};

			\addplot [only marks,mark = +,mark options={fill=black, rotate=45}, mark size = 8pt] coordinates {(1.25, {normcdf(1.25,0,2)})};

			\addplot [only marks,mark = ,mark options={fill=white}, mark size = 2pt] coordinates {(0, {normcdf(0,2,2)})};
			\addplot [only marks,mark = +,mark options={fill=black, rotate=45}, mark size = 8pt] coordinates {(-0.75, {normcdf(-0.75,2,2)})};

			\addplot [very thick, red, domain=-4:0] {normcdf(x,2,2)};
			\addplot [very thick, blue, domain=-2.5:1.25] {normcdf(x,0,2)};
			
			\addplot [thick, orange] {normcdf(1.25,0,2)};
			
			\addplot[thick, black] coordinates {(0,0)(0,1)};
	\end{axis}
	\end{tikzpicture}
	}
	\caption{$ \overline \margCostCrime_1 < \overline \margCostCrime_2 < \overline \Delta_1 + \epsilon $}
	\label{eqBase_2}
\end{subfigure}%
\begin{subfigure}{0.4\columnwidth}
  \centering
  		\resizebox{1.0\columnwidth}{!}{%
  \begin{tikzpicture}
			\begin{axis}[%
			xlabel=$\margCostCrime$,
			ylabel=$\outOptCdf_g(\margCostCrime_g)$,
			legend entries={$\outOptCdf_1$, $\outOptCdf_2$},
			legend pos=south east]
				\addplot [smooth, red] {normcdf(x,2,2)};
			\addlegendentry{$\outOptCdf_1$}
			\addplot [smooth, blue] {normcdf(x,0,2)};
			\addlegendentry{$\outOptCdf_2$}

			\addplot [only marks, mark = triangle*, mark options={rotate=90},mark size=5pt, blue, forget plot] coordinates {(0, {normcdf(0,0,2)})};
			\addplot [only marks, mark = triangle*, mark options={rotate=90},mark size=5pt, red, forget plot] coordinates {(0, {normcdf(0,2,2)})};
			
			\addplot [only marks,mark = ,mark options={fill=white}, mark size = 2pt] coordinates {(0, {normcdf(0,0,2)})};

			\addplot [only marks,mark = +,mark options={fill=black, rotate=45}, mark size = 8pt] coordinates {(0, {normcdf(0,0,2)})};

			\addplot [only marks,mark = ,mark options={fill=white}, mark size = 2pt] coordinates {(0, {normcdf(0,2,2)})};
			\addplot [only marks,mark = +,mark options={fill=black, rotate=45}, mark size = 8pt] coordinates {(-2, {normcdf(-2,2,2)})};

			\addplot [very thick, red, domain=-4:0] {normcdf(x,2,2)};
			\addplot [very thick, blue, domain=-2.5:0] {normcdf(x,0,2)};
			
			\addplot [thick, orange] {normcdf(0,0,2)};
			
			\addplot[thick, black] coordinates {(0,0)(0,1)};			
			\end{axis}
			\end{tikzpicture}
			}
  \caption{$ \overline \margCostCrime_1 = \overline \margCostCrime_2 $}
  \label{eqBase_3}
\end{subfigure}%

\begin{subfigure}{0.4\columnwidth}
  \centering
  		\resizebox{1.0\columnwidth}{!}{%
  \begin{tikzpicture}
			\begin{axis}[%
			xlabel=$\margCostCrime$,
			ylabel=$\outOptCdf_g(\margCostCrime_g)$,
			legend entries={$\outOptCdf_1$, $\outOptCdf_2$},
			legend pos=south east]
				\addplot [smooth, red] {normcdf(x,2,2)};
			\addlegendentry{$\outOptCdf_1$}
			\addplot [smooth, blue] {normcdf(x,0,2)};
			\addlegendentry{$\outOptCdf_2$}

			\addplot [only marks, mark = triangle*, mark options={rotate=90},mark size=5pt, blue, forget plot] coordinates {(-0.6, {normcdf(-0.6,0,2)})};
			\addplot [only marks, mark = triangle*, mark options={rotate=90},mark size=5pt, red, forget plot] coordinates {(0, {normcdf(0,2,2)})};
			
			\addplot [only marks,mark = ,mark options={fill=white}, mark size = 2pt] coordinates {(-0.6, {normcdf(-0.6,0,2)})};

			\addplot [only marks,mark = +,mark options={fill=black, rotate=45}, mark size = 8pt] coordinates {(-0.6, {normcdf(-0.6,0,2)})};

			\addplot [only marks,mark = ,mark options={fill=white}, mark size = 2pt] coordinates {(-0.6, {normcdf(-0.6,2,2)})};
			\addplot [only marks,mark = +,mark options={fill=black, rotate=45}, mark size = 8pt] coordinates {(-2.6, {normcdf(-2.6,2,2)})};

			\addplot [very thick, red, domain=-4:0] {normcdf(x,2,2)};
			\addplot [very thick, blue, domain=-2.5:-0.6] {normcdf(x,0,2)};
			
			\addplot [thick, orange] {normcdf(-0.6,0,2)};
			
			\addplot[thick, black] coordinates {(-0.6,0)(-0.6,1)};			
			\end{axis}
			\end{tikzpicture}
			}
  \caption{$ \overline \margCostCrime_1 > \overline \margCostCrime_2 $}
  \label{eqBase_4}
\end{subfigure}%
\caption{}
\label{fig:eql_base_rates_combined}
\end{figure}

The theorem is best demonstrated using the diagrams, although the formal proof is provided in the appendix. For theorem \ref{eqBaseRates_eqIncen}, we consider 4 cases: (i) $\overline{\margCostCrime}_1 + \epsilon \le \overline{\margCostCrime}_2$, (ii) $\overline{\margCostCrime}_1 < \overline{\margCostCrime}_2 <  \overline{\margCostCrime}_1 + \epsilon$, (iii) $\overline{\margCostCrime}_1 =\overline{\margCostCrime}_2$, and (iv) $\overline{\margCostCrime}_1 > \overline{\margCostCrime}_2$. For initial illustration purposes, we will focus on figure \ref{eqBase_1} which corresponds to the first case, $\overline{\margCostCrime}_1 + \epsilon < \overline{\margCostCrime}_2$. Each red and blue curve represents $ \outOptCdf_1(\cdot) $ and $ \outOptCdf_2(\cdot) $, respectively. Note the `thicker' segments of each curve are the set of crime rates $\outOptCdf_g(\margCostCrime_g)$ that can be achieved by varying the disincentive, $\margCostCrime_g \in [\underline \margCostCrime_g, \overline \margCostCrime_g] $. For each group, we denote the optimal policy by a triangle. The optimal policy while equalizing disincentives, which is denoted by `X', is obtained as intersections of the black line and the outside option distribution functions. The optimal policy while equalizing crime rates, which is denoted by `o', is obtained as intersections of the orange line and the outside option distribution function. In figure \ref{eqBase_1}, note that for group $1$, the crime rate stay the same under equalizing disincentives and crime rates, but the crime rate increases once one changes from the optimal policy that equalizes disincentives to the one that equalizes crime rates. Therefore, the optimal policy subject to equalizing crime rates is more preferred than under equalizing disincentives.


Figure \ref{fig:eql_base_rates_combined} shows the other cases. Note that as we decrease $\overline{\margCostCrime}_2$ (or equivalently increase the crime rate of group $2$), there comes a point determined by $\numPeople_1$ and $\numPeople_2$ at which equalizing crime rates is more preferred than equalizing disincentives. More specifically, in figure \ref{eqBase_2}, imagine moving the right most blue triangle to the left and hence raising the orange line; as this happens, both `X' marks, which denote the optimal policy that equalizes crime rate, need to go up, while the optimal policy that equalizes incentive stays the same. Therefore, depending on the ratio of the number of people ($\numPeople_1$ and $\numPeople_2$), there exists some $\epsilon$ such that $\overline{\margCostCrime}_2 \le \overline{\margCostCrime}_1 + \epsilon$ if and only if equalizing crime rate attains lower crime rate than equalizing crime rates. And it's easy to see from figure \ref{eqBase_3} and \ref{eqBase_4} that equalizing disincentives achieves lower crime rate than equalizing crime rates in the corresponding cases. Therefore, these arguments together imply that equalizing crime rates attains a lower crime rate than equalizing disincentives if and only if $ \overline \margCostCrime_2 $ is sufficiently higher than $ \overline \margCostCrime_1 $.


\begin{figure} [h]
			\centering
			\resizebox{0.4\columnwidth}{!}{%
			\begin{tikzpicture}
			\begin{axis}[%
			xlabel=$\margCostCrime$,
			ylabel=$\outOptCdf_g(\margCostCrime_g)$,
			legend entries={$\outOptCdf_1$, $\outOptCdf_2$},
			legend pos=south east]
			\addplot [smooth, red] {normcdf(x,2,2)};
			\addplot [smooth, blue] {normcdf(x,0,2)};
			\addplot [very thick, red, domain=-4:0] {normcdf(x,2,2)};
			\addplot [very thick, blue, domain=-2.5:2] {normcdf(x,0,2)};

			\addplot [thick, orange] {normcdf(0,2,2)};
			\addplot[thick, black] coordinates {(-0,0)(0,1)};
			

			\addplot [only marks, mark = triangle*, mark options={rotate=90},mark size=5pt, red] coordinates {(0, {normcdf(0,2,2)})};
			\addplot [only marks, mark = triangle*, mark options={rotate=90},mark size=5pt, blue] coordinates {(2, {normcdf(2,0,2)})};
			
			\end{axis}
			\end{tikzpicture}
			}
			\caption{$ \outOptCdf_1(\overline \margCostCrime_1) = \outOptCdf_2 (\overline \margCostCrime_2) $}
			\label{eqBase_5}
		\end{figure}

Having verified that equalizing crime rates can attain lower crime rates than equalizing disincentives, the next question is whether equalizing crime rates can ever attain a lower crime rate than any of other fairness notions. The answer to this question is positive which we establish by finding a condition under which the optimal policy attains equalizing crime rates.

\begin{restatable}{theorem}{eqBaseRatesOpt}
\label{eqBaseRates_opt}
    Suppose that $ \outOptCdf_2 $ first-order stochastically dominates $ \outOptCdf_1 $. When $ \outOptCdf_1( \overline{\margCostCrime}_1) = \outOptCdf_2(  \overline{\margCostCrime}_2) $, the optimal policy equalizes crime rates but not necessarily false positive rates, false negative rates or disincentives in general.
\end{restatable}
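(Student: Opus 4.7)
The plan is to observe that the first statement follows almost immediately from the characterization of the unconstrained optimal policy established in Sections \ref{sec:baseline} and \ref{sec:hetsignals}, after which the "not necessarily" clause is handled by exhibiting explicit cases where signal structures differ enough that equalizing crime rates does not coincide with equalizing disincentives, FPR or FNR.

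First, I would recall from the proof of Theorem \ref{thm:opt} (and the analogous discussion in Section \ref{sec:hetsignals}) that since $\probPolicy_g$ can be chosen independently for each group and the within-group crime rate $\outOptCdf_g(\margCostCrime_g)$ is non-increasing in $\margCostCrime_g$, the unconstrained optimal policy $\probPolicy^\star$ sets $\margCostCrime_g = \overline\margCostCrime_g$ for every $g$. Consequently, the crime rate of group $g$ induced by $\probPolicy^\star$ is exactly $\outOptCdf_g(\overline\margCostCrime_g)$. Under the hypothesis $\outOptCdf_1(\overline\margCostCrime_1) = \outOptCdf_2(\overline\margCostCrime_2)$, the two groups therefore experience the same crime rate at $\probPolicy^\star$, which is what it means for $\probPolicy^\star$ to equalize crime rates. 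Because $\probPolicy^\star$ is a feasible rule lying in $\Beta_{\crimeRate}$ and simultaneously attains the unconstrained minimum of \eqref{eqn:obj}, it must also be the optimal policy subject to equalizing crime rates. This gives the first half of the theorem.

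For the second half, I would argue that nothing in the hypothesis forces the two signal structures to be related in the way required to equalize disincentives, FPR, or FNR. Concretely, the condition $\outOptCdf_1(\overline\margCostCrime_1) = \outOptCdf_2(\overline\margCostCrime_2)$ constrains only the values of the survivor functions $\outOptCdf_g$ at the group-specific maximal disincentives, not the maximal disincentives themselves; with $\outOptCdf_2$ first-order stochastically dominating $\outOptCdf_1$, achieving this equality generically requires $\overline\margCostCrime_2 > \overline\margCostCrime_1$, so $\probPolicy^\star$ will \emph{not} equalize disincentives. One can then either cite a location-scale instance from Section \ref{sec:hetsignals} in which $\frac{m_1}{\sigma_1} < \frac{m_2}{\sigma_2}$ (which by Theorem \ref{scale} produces $\overline\Delta_1 < \overline\Delta_2$ and differing FPRs and FNRs at $\probPolicy^\star$), or construct a two-point example analogous to Example \ref{ex:het_eqInc} with asymmetric distributions, and verify by direct computation that $\FPR_1(\probPolicy^\star) \neq \FPR_2(\probPolicy^\star)$ and $\FNR_1(\probPolicy^\star) \neq \FNR_2(\probPolicy^\star)$.

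The main obstacle is really just the second half: one must be careful that the counterexample used to show non-equalization of FPR/FNR/disincentives is genuinely consistent with the hypothesis $\outOptCdf_1(\overline\margCostCrime_1) = \outOptCdf_2(\overline\margCostCrime_2)$ and with $\outOptCdf_2$ first-order stochastically dominating $\outOptCdf_1$. This is achievable by first fixing the signal structures so that $\overline\margCostCrime_1, \overline\margCostCrime_2$ are pinned down and unequal, and then choosing $\outOptCdf_1, \outOptCdf_2$ (within the stochastic-dominance constraint) so that their values at these two distinct points coincide—there is plenty of freedom to do so. The first half, by contrast, is essentially immediate from the group-separable structure of \eqref{eqn:obj} already exploited in Theorem \ref{thm:opt}.
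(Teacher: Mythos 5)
Your proposal is correct and follows essentially the same route as the paper: the unconstrained optimum sets $\margCostCrime_g=\overline{\margCostCrime}_g$ for each group, so the hypothesis $\outOptCdf_1(\overline{\margCostCrime}_1)=\outOptCdf_2(\overline{\margCostCrime}_2)$ immediately yields equal crime rates, and the failure of the other notions comes from $\overline{\margCostCrime}_1\neq\overline{\margCostCrime}_2$ in general. The only cosmetic difference is that the paper dispatches the FPR/FNR clause via the identity $\margCostCrime_g=(1-\FNR_g)-\FPR_g$ (unequal disincentives preclude equalizing both), whereas you invoke explicit location-scale or Example~\ref{ex:het_eqInc}-style instances, which is a slightly more laborious but equally valid way to make the same point.
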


Figure \ref{eqBase_5} depicts the case when $ H_1(\overline \Delta_1) = H_2(\overline \Delta_2) $. By construction, the optimal policy equalizes crime rates. As it can be seen from Figure \ref{eqBase_5}, equalizing disincentives attains a strictly higher crime rates than equalizing crime rates. Furthermore, it can be shown that other notions of fairness --- equalizing false positive rates, false negative rates and positive predictive value - are not satisfied in general.

\section{Discussion and Conclusions} \label{sec:conclude}

This paper gives a general model in which classification rules which equalize false positive and false negative rates can be compatible with natural objectives, \emph{in spite of failing to capitalize on statistically relevant information}. We derived the model using the language of criminal justice, but one could just as easily apply the base model to settings in which the principle was making some other binary decision based on partial information, such as a lending or employment decision. The underlying reason is that conditioning on demographic information, while statistically useful, leads to decision rules that incentivize different groups differently --- \emph{because demographic information is not under individual control}. Hence, in settings in which the underlying objective depends on the decisions of rational agents, the decision rule should explicitly commit \emph{not} to condition on information that relates to an individual's demographic group, and instead use only information that is affected by the choices of the individual.  Abstracting away, the necessary conditions under which our conclusions hold are that:
\begin{enumerate}
    \item The underlying base rates are rationally responsive to the decision rule deployed by the principle,
    \item Signals are observed by the adjudicator at the same rates across populations, and
    \item The signals that the adjudicator must use to make her decision are conditionally independent of an individual's group, conditioned on the individual's decision.
\end{enumerate}
Here, conditions (2) and (3) are unlikely to hold precisely in most situations, but we give settings under which they can be relaxed.

More generally, if we are in a setting in which we believe that individual decisions are rationally made in response to the deployed classifier, and yet the deployed classifier does \emph{not} equalize false positive and negative rates, then this is an indication that \emph{either} the deployed classifier is sub-optimal (for the purpose of minimizing base rates), \emph{or} that one of conditions (2) and (3) fails to hold. Since in fairness relevant settings, the failure of conditions (2) and (3) is itself undesirable, this can be a diagnostic to highlight discriminatory conditions earlier in the pipeline than the adjudicator's decision rule. In particular, if conditions (2) or (3) fail to hold, then imposing technical fairness constraints on a deployed classifier may be premature, and instead attention should be focused on structural differences in the observations that are being fed into the deployed classifier.

\bibliographystyle{plainnat}
\bibliography{Refs}

\newpage

\appendix
\numberwithin{lemma}{section}
\numberwithin{equation}{section}
\section{Omitted Results and Proofs}\label{sec:appendix}

\thmPersicoEquivalenceFirstSecond*
\begin{proof}[Proof of Theorem \ref{thm:persico-equivalence-first-second}]

We show that the optimal disincentives $\{\margCostCrime^*_g\}_g$ in both cases must be $\{\overline{\margCostCrime}_g\}_g$, which entails equalizing $\CFPR$ and $\CTPR$.

As for the first best outcome, it's easy to see that given any inspection intensities $\{\inspecInsty_g\}_{i \in \groups}$, minimizing the overall crime rate corresponds to maximizing $\{\margCostCrime_{i}\}_{i}$. Then, it follows that for the optimal set of search intensities and signal thresholds, the optimal solution should in the first best outcome should be such that $\margCostCrime^*_g = \overline{\margCostCrime}_g$ for each $g$.

Once again, for the second best outcome, given any feasible solution $(\{\inspecInsty_g\}_{g}, \{\margCostCrime_g\}_g)$ to \ref{persico-second-best}, we can show that if there exists $g$ such that $\margCostCrime_{g}$ is not maximized (i.e. $\margCostCrime_g < \overline{\margCostCrime}_g$), then we can always find a new feasible solution $(\{\inspecInsty'_g\}_g, \{\margCostCrime'_g\}_g)$ to \ref{persico-second-best} that sets $\margCostCrime'_{g} = \overline{\margCostCrime}_g$, while keeping other $\margCostCrime_g'$ the same  ($\margCostCrime'_{g'} = \margCostCrime_{g'}$) with strictly lower overall crime rate. This shows that the optimal solution to \ref{persico-second-best} must set $\margCostCrime^*_{g} = \overline{\margCostCrime}_g$ for each $g$.

Without loss of generality, assume that $\margCostCrime_{1} < \overline{\margCostCrime}_1$ is not maximized. Let's say $\margCostCrime'_{1} = \overline{\margCostCrime}_1 = (1+\epsilon) \margCostCrime_{1}$ for some $\epsilon > 0$ and $\margCostCrime'_2 = \margCostCrime_2$. Now, consider setting a new inspection intensity for group $1$ such that $\inspecInsty'_1 \in (\frac{1}{1+\epsilon}\inspecInsty_1, \inspecInsty_1)$ to guarantee that the crime rate in group $1$ will be strictly lower than before -- that is
\[\inspecInsty'_1 \margCostCrime'_1 > \inspecInsty_1 \margCostCrime_1 \quad\implies\quad \outOptCdf_1(\inspecInsty'_1 \margCostCrime'_1) < \outOptCdf_1(\inspecInsty_1 \margCostCrime_1).\] Then, because $\sum_g \numPeople_g \inspecInsty_g = \searchCap$, decreasing $\inspecInsty_1$ to $\inspecInsty'_1$ will require increasing $\inspecInsty_2$ to some $\inspecInsty'_2$. Then, the crime rate for group $2$ will necessarily decrease:
\[
\inspecInsty'_2 \margCostCrime'_2 > \inspecInsty_2 \margCostCrime_2 \quad\implies\quad \outOptCdf_2(\inspecInsty'_2 \margCostCrime'_2) < \outOptCdf_2(\inspecInsty_2 \margCostCrime_2).
\]
By the continuity of $\outOptCdf_g$, there exists $\inspecInsty'_1 \in (\frac{1}{1+\epsilon}\inspecInsty_A, \inspecInsty)$ such that $\outOptCdf_1(\inspecInsty'_1\beta'_1)) = \outOptCdf_2(\inspecInsty'_2\beta_2)$. Note that the crime rate in group 1 and group 2 must have decreased. Therefore, we have found a better feasible solution to \ref{persico-second-best}.

Also, note that this proof can be generalized even when the number of groups is greater than 2. We can aggregate a collection of groups whose $\margCostCrime_g$ are not changed to one `super' group, apply the same argument as above, and use induction over the number of groups.
\end{proof}

\thminspection*
\begin{proof}[Proof of Theorem \ref{inspection}]
We first provide a high level sketch of the proof. Using the fact that both $\outOptCdf_g$'s are from the same family (i.e. mean shifted), we show that the equilibrium inspection intensities (i.e. the second best solution) set the derivative of the objective value to 0. Now, using the the convexity (concavity) of $\outOptCdf$, we can show that the second derivative of the overall crime rate will be positive (negative), showing the equilibrium inspection intensities achieve local minima (maxima).

First, we show that if $\outOptCdf_g$'s belong to the same location family, then the derivative of the objective value evaluated at the equilibrium inspection intensities will be 0. Denote the equilibrium inspection intensities by $\{\inspecInsty_g^*\}_g$ and the equilibrium disincentives by $\{\margCostCrime^*_g\}_g$. Recall from Theorem \ref{thm:persico-equivalence-first-second} that $(\margCostCrime^*_1, \margCostCrime^*_2)$ in both the first and second best solution correspond to $(\overline{\margCostCrime}_1, \overline{\margCostCrime}_2)$.

The equilibrium inspection intensities equalize the crime rates. Hence, we have
\begin{align}
\label{eqn:inspec_equil_gnspec}
\outOptCdf_1(\inspecInsty^*_1 \margCostCrime_1) &= \outOptCdf_2(\inspecInsty^*_2 \margCostCrime_2) \cr
\Rightarrow \outOptCdf(\inspecInsty^*_1 \margCostCrime_1 - \mu_1) &= \outOptCdf( \inspecInsty^*_2 \margCostCrime_2 - \mu_2) \cr
\Rightarrow \inspecInsty^*_1 \margCostCrime_1 - \mu_1 &= \inspecInsty^*_2  \margCostCrime_2 - \mu_2 \cr
\Rightarrow \outOptPdf(\inspecInsty^*_1 \margCostCrime_1 - \mu_1) &= \outOptPdf( \inspecInsty^*_2 \margCostCrime_2 - \mu_2)\cr
\Rightarrow \outOptPdf_1(\inspecInsty^*_1 \margCostCrime_1) &= \outOptPdf_2( \inspecInsty^*_2 \margCostCrime_2)\cr
\end{align}


Replacing $\inspecInsty_2 = \frac{\searchCap - \numPeople_1 \inspecInsty_1}{\numPeople_2}$ and taking the derivative of the overall crime rate with respect to $\inspecInsty_1$ yields
\begin{align*}
&\numPeople_1 \margCostCrime_1 \outOptPdf_1(\inspecInsty_1\margCostCrime_1) + \numPeople_2 \outOptPdf_2\left(\left(\frac{\searchCap - \numPeople_1 \inspecInsty_1}{\numPeople_2}\right)\margCostCrime_2\right) \left(-\frac{\numPeople_1}{\numPeople_2}\margCostCrime_2\right) \cr
&= \numPeople_1 \margCostCrime_1\outOptPdf_1(\inspecInsty_1\margCostCrime_1) - \numPeople_1 \margCostCrime_2 \outOptPdf_2\left(\inspecInsty_2 \margCostCrime_2\right) \cr
\end{align*}
Note that by equation \ref{eqn:inspec_equil_gnspec} and $\margCostCrime^*_1 = \margCostCrime^*_2$, the derivative of the overall crime rate evaluates to 0 under $\{\inspecInsty_g^*\}_g$ and $\{\margCostCrime^*_g\}_g$.

Now, in order to determine whether $\{\inspecInsty_g^*\}_g$ achieves a local minima or maxima, we calculate the second derivative of the overall crime rate with respect to $\inspecInsty_1$:

\begin{align*}
    &\numPeople_1\margCostCrime_1^2 \outOptPdf'_1(\inspecInsty_1\margCostCrime_1) - \numPeople_1\margCostCrime_2 \outOptPdf'_2\left(\left(\frac{\searchCap - \numPeople_1\inspecInsty_1}{\numPeople_2}\right)\margCostCrime_2 \right) \left(-\frac{\numPeople_1}{\numPeople_2}\margCostCrime_2\right)\cr
    &=\numPeople_1\margCostCrime_1^2 \outOptPdf'_1(\inspecInsty_1\margCostCrime_1) + \frac{\numPeople_1^2}{\numPeople_2}\margCostCrime^2_2 \outOptPdf'_2\left(\left(\frac{\searchCap - \numPeople_1\inspecInsty_1}{\numPeople_2}\right)\margCostCrime_2 \right)
\end{align*}
By the convexity (concavity) of $\outOptPdf$, $\outOptPdf'_1$ and $\outOptPdf'_2$ is positive (negative). Therefore, $\{\inspecInsty_g^*\}_g$ achieves a local minima (maxima) at the equilibrium inspection intensities.
\end{proof}

\thmhetEqInc*
\begin{proof}[Proof of Theorem \ref{thm:het_eqInc}]
This follows directly from the fact that $(\overline{\margCostCrime}_1, \overline{\margCostCrime}_2)$ is the adjudicator's most optimal policy.
\end{proof}

\thmHetLemComparison*
\begin{proof}[Proof of Theorem \ref{thm:het_lem_comparison}]

Let us begin with proving the following lemma.

\begin{lemma}\label{het_lemfprfnr}
Suppose that $ \overline \margCostCrime_2 > \overline \margCostCrime_1 $. Let $ \margCostCrime_g^{\FPR}, \margCostCrime_g^{\FNR}$ and $ \margCostCrime_g^{\margCostCrime} $ be the disincentives under the optimal policy subject to each fairness notion $\FPR$, $\FNR$, and $\margCostCrime$ respectively. Suppose further that the optimal policies while equalizing false positive rates, false negative rates and disincentives are threshold policies.
    \begin{enumerate}
        \item Equalizing false positive rates attains a (weakly) lower crime rate than equalizing disincentives for all $ (\outOptPdf_g)_g $ if and only if $ \margCostCrime_g^{\FPR} > (\geq) \margCostCrime_g^{\margCostCrime}\ \forall g $
        \item Equalizing false negative rates attains a (weakly) lower crime rate than equalizing disincentives for all $ (\outOptPdf_g)_g $ if and only if $ \margCostCrime_g^{\FNR} > (\geq) \margCostCrime_g^{\margCostCrime}\ \forall g $
    \end{enumerate}
\end{lemma}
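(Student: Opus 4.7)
The lemma reduces the comparison of total crime rates to a coordinate-wise comparison of induced disincentives. Both parts have the same structure: replacing $\FPR$ by $\FNR$ throughout the argument for part~1 yields part~2. The key fact is that for any policy, the crime rate in group $g$ is $H_g(\margCostCrime_g)$, and each $H_g$ is a non-increasing survivor function, so the total crime rate $\sum_g N_g H_g(\margCostCrime_g)$ is coordinate-wise non-increasing in the disincentive profile $(\margCostCrime_g)_g$.

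\textbf{Forward direction $(\Leftarrow)$.} Fix $(h_g)_g$. If $\margCostCrime_g^{\FPR} \geq \margCostCrime_g^{\margCostCrime}$ for every $g$, monotonicity of each $H_g$ gives $H_g(\margCostCrime_g^{\FPR}) \leq H_g(\margCostCrime_g^{\margCostCrime})$, and weighting by $N_g$ and summing over $g$ yields $\sum_g N_g H_g(\margCostCrime_g^{\FPR}) \leq \sum_g N_g H_g(\margCostCrime_g^{\margCostCrime})$, which is exactly the crime-rate inequality. Strictness follows when at least one disincentive inequality is strict and the corresponding $H_g$ is strictly decreasing there.

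\textbf{Backward direction $(\Rightarrow)$.} I would argue by contrapositive, exhibiting a test distribution that falsifies the crime-rate comparison whenever the pointwise disincentive comparison fails. Suppose $\margCostCrime_{g_0}^{\FPR} < \margCostCrime_{g_0}^{\margCostCrime}$ for some $g_0$ under some $(h_g)_g$. I would construct $H_{g_0}'$ close to a step function dropping from $1$ to $0$ at some $\margCostCrime^{\ast}$ strictly between $\margCostCrime_{g_0}^{\FPR}$ and $\margCostCrime_{g_0}^{\margCostCrime}$, while choosing $H_{g'}'$ essentially flat on the range of disincentives reachable by both policies for every $g' \neq g_0$. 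Then the $g_0$-contribution to total crime is approximately $N_{g_0}$ under $\FPR$ and $0$ under $\margCostCrime$, while the other contributions agree to first order, so the crime-rate comparison is reversed.

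\textbf{Main obstacle.} The delicate point is that the quantities $\margCostCrime_g^{\FPR}$ and $\margCostCrime_g^{\margCostCrime}$ themselves depend on $(h_g)_g$, so the perturbation above shifts them as well. I would handle this by observing that the disincentive-equalizing optimum satisfies $\margCostCrime_g^{\margCostCrime} = \overline{\margCostCrime}_1$ whenever each $H_g$ is strictly decreasing on the relevant range (since the common disincentive is pushed as high as possible and is capped at $\overline{\margCostCrime}_1$ by the hypothesis $\overline{\margCostCrime}_2 > \overline{\margCostCrime}_1$), a property that is preserved by the perturbations above. For $\margCostCrime_g^{\FPR}$, I would appeal to continuity of the optimal $\FPR$-constrained threshold in the data to keep the starting strict gap $\margCostCrime_{g_0}^{\FPR} < \overline{\margCostCrime}_1$ stable under sufficiently small perturbations, and then let the steepness of $H_{g_0}'$ make the $g_0$-contribution dominate. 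The $\FNR$ case is identical after symbol substitution.
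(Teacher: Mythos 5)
Your forward direction is exactly the paper's: monotonicity of each survivor function $\outOptCdf_g$ turns the coordinate-wise comparison of disincentive profiles into the comparison of total crime rates (your remark that strictness additionally needs $\outOptCdf_g$ strictly decreasing is in fact slightly more careful than the paper's one-line ``iff''). Your backward direction is also, at its core, the paper's construction: when $\margCostCrime_{g_0}^{\FPR} < \margCostCrime_{g_0}^{\margCostCrime}$ for some group, choose outside-option distributions whose group-$g_0$ survivor function drops sharply between the two disincentive values while the other group's is essentially flat, so the group-$g_0$ term dominates and the crime-rate comparison reverses. The paper does the same thing with explicit $\epsilon$ bookkeeping, split into the cases $\margCostCrime_2^{\FPR} \le \margCostCrime_2^{\margCostCrime}$ and $\margCostCrime_2^{\FPR} > \margCostCrime_2^{\margCostCrime}$, both of which your single construction subsumes.

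Where your write-up goes beyond the paper is the ``main obstacle'' paragraph, and that is where it does not hold together. The difficulty you flag --- that $\margCostCrime_g^{\FPR}$ and $\margCostCrime_g^{\margCostCrime}$ are defined through constrained optima and therefore move when $(\outOptPdf_g)_g$ is changed --- is real, but your proposed fix is internally inconsistent: you appeal to continuity of the $\FPR$-constrained optimizer under ``sufficiently small perturbations,'' while your construction is a large perturbation (a near-step $\outOptCdf_{g_0}$ together with near-flat $\outOptCdf_{g'}$ for $g' \ne g_0$). Moreover, if one honestly re-optimizes under the new distributions the construction can self-destruct: with $\outOptCdf_{g'}$ essentially flat, the $\FPR$-constrained optimum essentially maximizes $\margCostCrime_{g_0}$ along the equal-$\FPR$ curve, so (for instance when $g_0 = 1$) the re-optimized $\margCostCrime_1^{\FPR}$ is pushed up to $\overline{\margCostCrime}_1 = \margCostCrime_1^{\margCostCrime}$ and the gap you are exploiting disappears, leaving no reversal. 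The paper avoids (rather than solves) this issue by implicitly holding the two policies, hence the profiles $(\margCostCrime_g^{\FPR})_g$ and $(\margCostCrime_g^{\margCostCrime})_g$, fixed while the distributions $(\outOptPdf_g)_g$ are varied --- i.e.\ the lemma is effectively the statement that one fixed disincentive profile yields weakly lower crime for every $(\outOptPdf_g)_g$ if and only if it dominates coordinate-wise. Under that reading your construction is correct as it stands and you should simply delete the continuity argument (your observation that $\margCostCrime_g^{\margCostCrime} = \overline{\margCostCrime}_1$ is fine and is also what the paper uses downstream); if instead you insist on re-optimizing the constrained policies for each choice of $(\outOptPdf_g)_g$, your argument has a genuine gap at exactly this point --- and, for what it is worth, so does the paper's.
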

\begin{proof}[Proof of lemma \ref{het_lemfprfnr}]
By definition, $ \margCostCrime_g^{\fairnessNotion} = \int_{\mathbb{R}}\crimeSigPDF(\sigVal)\probPolicy_g^\fairnessNotion(s)d\sigVal - \int_{\mathbb{R}}\noncrimeSigPDF(\sigVal)\probPolicy_g^\fairnessNotion(s)d\sigVal $ is the disincentive under the optimal policy $ \probPolicy_g^\fairnessNotion\in \arg\min_{\beta\in B_\fairnessNotion} \sum_{g \in \groups} \numPeople_g  \crimeRate_g $ that achieve each fairness notion $ \fairnessNotion\in \{\text{FPR,FNR},\Delta\} $

It is straightforward that the $ \margCostCrime_g^{\FPR} > (\geq) \margCostCrime_g^{\margCostCrime}\ \forall g $ implies equalizing false positive rates attains a (weakly) lower crime rate than equalizing disincentives. Note that $\outOptCdf_g$ is a non-increasing function, and therefore, for all $ g $,
\[
    \margCostCrime_g^{\FPR} > (\geq) \margCostCrime_g^{\margCostCrime} \iff \numPeople_g \outOptCdf_g(\margCostCrime_g^{\FPR}) < (\leq) \numPeople_g \outOptCdf_g(\margCostCrime_g^{\margCostCrime}).
\]
The same argument applies for $\FNR$.

Let us now prove the opposite direction: equalizing false positive rates attains a lower crime rate than equalizing disincentives for all $g$ implies that $\margCostCrime^{\FPR}_g > (\ge) \margCostCrime^{\margCostCrime}_g$ for all $g$. 
To show this, suppose not. Without loss of generality, suppose that $ \margCostCrime^{\FPR}_1<(\leq)\margCostCrime^{\margCostCrime}_1 $. If $ \margCostCrime^{\FPR}_2\leq \margCostCrime^{\margCostCrime}_2 $, then any pair of survivor functions $ (H_1,H_2) $ that are strictly decreasing in their arguments will imply  $ H_1(\margCostCrime^{\FPR}_1) > H_1(\margCostCrime^{\margCostCrime}_1) $ and $ H_2(\margCostCrime^{\FPR}_2) \geq H_2(\margCostCrime^{\margCostCrime}_2) $ so that equalizing disincentives attains a strictly lower crime rate than equalizing false positive rates. If $ \margCostCrime^{\FPR}_2> \margCostCrime^{\margCostCrime}_2 $, then $ H_1 $ where that the difference between its value at $ \margCostCrime^{\FPR}_1 $ and at $ \margCostCrime^{\margCostCrime}_1 $ is large enough and $ H_2 $ where the difference between its value at $ \margCostCrime^{\FPR}_2 $ and at $ \margCostCrime^{\margCostCrime}_2 $ is small enough result in a lower crime rate for equalizing disincentives than for equalizing false positive rates. More specifically, let $ H_1 $ and $ H_2 $ be such that
\[
    N_1 (H_1(\Delta_1^{FPR}) - H_1(\Delta_1^{\margCostCrime})) > \epsilon
\]
and
\[
    N_2 (H_2(\Delta_2^{\margCostCrime}) - H_2(\Delta_2^{FPR})) < \epsilon
\]
for some $ \epsilon>0 $. Then,
\begin{align*}
    \Big(N_1 H_1(\Delta_1^{FPR}) + N_1 H_2(\Delta_1^{\margCostCrime})\Big) - \Big(N_1H_1(\Delta_1^{\margCostCrime}) + N_2 H_2(\Delta_2^{\margCostCrime}) \Big) > 0
\end{align*}
which states that equalizing disincentives attain a strictly lower crime rate than equalizing false positive rates. This completes the contradiction desired. Therefore, it has to be the case that $ \margCostCrime^{\FPR}_g<(\leq)\margCostCrime^{\margCostCrime}_g $ for all $ g $. The same argument may be applied for equalizing false negative rates.


\end{proof}
Because we are assuming signal threshold strategies by the adjudicator, we will parametrize $\FPR_g(\sigThresh_g), \FNR_g(\sigThresh_g), \TPR_g(\sigThresh_g)$ to denote false positive, false negative, and true positive rate when the signal threshold $\sigThresh_g$ is used.

We first show the equivalence between condition (1) and (3) in the theorem. Before we show the equivalence, we make some characterization of $\{\margCostCrime_g^\FPR\}_g$ and $\{\margCostCrime^\margCostCrime_g\}_g$. Since $ \overline \margCostCrime_2 > \overline \margCostCrime_1 $, it must be that \[ \margCostCrime_2^{\margCostCrime} = \margCostCrime_1^{\margCostCrime} = \overline \margCostCrime_1. \]
As for the first condition (1), by lemma \ref{het_lemfprfnr}, we have that $ \margCostCrime_g^{\FPR}\geq \margCostCrime_g^{\margCostCrime} $ for all $g$. Thus, we have that for (1),
\[
\margCostCrime_1^{\FPR} = \margCostCrime_1^{\margCostCrime} = \margCostCrime^\margCostCrime_1,
\]
which implies
\[
\sigThresh_1^{\FPR} =\sigThresh_1^{\margCostCrime} = \sigThresh_1^*.
\]

For $\FPR_1 = \FPR_2$, we have
\[
\FPR_2(\sigThresh_2^{\FPR}) = \FPR_1(\sigThresh_1^{\FPR}) = \FPR_1(\sigThresh_1^{\margCostCrime}) =  \noncrimeSigCDF_1(\sigThresh_1^*).
\]
or equivalently,
\[
\noncrimeSigCDF_2(\sigThresh_2^{\FPR}) = \noncrimeSigCDF_1(\sigThresh_1^{\FPR}) = \noncrimeSigCDF_1(\sigThresh_1^{\margCostCrime}) =  \noncrimeSigCDF_1(\sigThresh_1^*).
\]

Now, we show the equivalence:
\begin{align*}
    &\margCostCrime_2^{\FPR}\geq \margCostCrime_2^\margCostCrime\\
    &\iff \TPR_2(\sigThresh_2^{\FPR}) - \FPR_2(\sigThresh_2^{\FPR}) \ge \TPR_1(\sigThresh_1^{\margCostCrime}) - \FPR_1(\sigThresh_1^{\margCostCrime}) && \margCostCrime^\margCostCrime_2 = \margCostCrime_1^\margCostCrime\\
    &\iff \TPR_2(\sigThresh_2^{\FPR}) \ge \TPR_1(\sigThresh_1^{\margCostCrime}) &&\FPR_2(\sigThresh_2^{\FPR}) =  \FPR_1(\sigThresh_1^{\margCostCrime}) =  \FNR_1(\sigThresh_1^*)\\
    &\iff \crimeSigCDF_1(\sigThresh_1^{\margCostCrime}) \ge \crimeSigCDF_2(\sigThresh_2^{\FPR}) \\
    &\iff \crimeSigCDF_1(\sigThresh_1^*) \geq \crimeSigCDF_2(\sigThresh_2^{\FPR}) \\
    &\iff \crimeSigCDF_1(\sigThresh_1^*) \geq \crimeSigCDF_2((\noncrimeSigCDF_2)^{-1}\circ \noncrimeSigCDF_1(\sigThresh_1^{*})) && \sigThresh_2^{\FPR} = (\noncrimeSigCDF_2)^{-1}\circ \noncrimeSigCDF_1(\sigThresh_1^{\FPR}) = (\noncrimeSigCDF_2)^{-1}\circ \noncrimeSigCDF_1(\sigThresh_1^{*})\\
    &\iff (\crimeSigCDF_2)^{-1}\circ \crimeSigCDF_1(\sigThresh_1^*) \geq (\noncrimeSigCDF_2)^{-1}\circ \noncrimeSigCDF_1(\sigThresh_1^{*}) \\
\end{align*}



A similar logic applies to equalizing false negative rates.
Once again, by lemma \ref{het_lemfprfnr}, $ \margCostCrime_g^{\FNR} \geq \margCostCrime_g^{\margCostCrime} $ for all $g$, which implies  $ \margCostCrime_1^{\FNR} = \margCostCrime_1^{\margCostCrime} $ and hence, \[ \sigThresh_1^{\FNR} = \sigThresh_1^{\margCostCrime} = \sigThresh_1^*. \]
For $ \FNR_2 = \FNR_1 $, we have
\[
    \TPR_2(\sigThresh^{\FNR}_2) = \TPR_1(\sigThresh^{\FNR}_1) = \TPR_1(\sigThresh^\margCostCrime_1) = \TPR_1(\sigThresh^*_1)
\]
or equivalently,
\[
    \crimeSigCDF_2(\sigThresh^{\FNR}_2) = \crimeSigCDF_1(\sigThresh^{\FNR}_1) = \crimeSigCDF_1(\sigThresh^\margCostCrime_1) = \crimeSigCDF_1(\sigThresh^*_1).
\]

The equivalence follows, as
\begin{align*}
    &\margCostCrime_2^{\FNR}\geq \margCostCrime_2^\margCostCrime\\
    &\iff \TPR_2(\sigThresh_2^{\FNR}) - \FPR_2(\sigThresh_2^{\FNR}) \ge \TPR_1(\sigThresh_1^{\margCostCrime}) - \FPR_1(\sigThresh_1^{\margCostCrime}) && (\margCostCrime^\margCostCrime_2 = \margCostCrime_1^\margCostCrime)\\
    &\iff - \FPR_2(\sigThresh_2^{\FNR}) \ge  - \FPR_1(\sigThresh_1^{\margCostCrime}) &&(\TPR_2(\sigThresh_2^{\FNR}) =  \TPR_1(\sigThresh_1^{\margCostCrime}) =  \TPR_1(\sigThresh_1^*))\\
    &\iff \noncrimeSigCDF_2(\sigThresh_2^{\FNR}) \ge \noncrimeSigCDF_1(\sigThresh_1^{\margCostCrime})\\
    &\iff \noncrimeSigCDF_2(\sigThresh_2^{\FNR}) \ge \noncrimeSigCDF_1(\sigThresh_1^{*})\\
    &\iff \noncrimeSigCDF_2((\noncrimeSigCDF_2)^{-1}\circ \crimeSigCDF_1(\sigThresh_1^{*})) \geq \noncrimeSigCDF_1(\sigThresh_1^{*})
    && (\sigThresh_2^{\FNR} = (\crimeSigCDF_2)^{-1}\circ \crimeSigCDF_1(\sigThresh_1^{\FNR}) = (\crimeSigCDF_2)^{-1}\circ \crimeSigCDF_1(\sigThresh_1^{*}))\\
    &\iff (\crimeSigCDF_2)^{-1}\circ \crimeSigCDF_1(\sigThresh_1^*) \geq (\noncrimeSigCDF_2)^{-1}\circ \noncrimeSigCDF_1(\sigThresh_1^{*})
\end{align*}
\end{proof}

\thmscale*
\begin{proof}[Proof of Theorem \ref{scale}] 
\item\paragraph{Part (1)} Let $ \sigThresh^* $ be s.t.
	\[
		f(\sigThresh^*-r)=f(\sigThresh^*).
	\]
	Define $ \sigThresh_g = \mu_g + \sigma_g T^* $. Then,
	\begin{align*}
		-f_g(\sigThresh_g-m)+f_g(\sigThresh_g) &= -\frac 1\sigma_g f\Big(\frac{\mu_g+\sigma_g \sigThresh^* - \mu_g - m}{\sigma_g}\Big)+\frac 1\sigma_g f\Big(\frac{\mu_g+\sigma_g \sigThresh^* - \mu_g}{\sigma_g}\Big)\\
		&= -\frac 1\sigma_g f\Big(\sigThresh^*-\frac{m_g}{\sigma_g}\Big)+\frac 1\sigma_g f\Big(\sigThresh^*\Big)\\
		&= -\frac 1\sigma_g \Big(f(\sigThresh^*-r)+ f(\sigThresh^*)\Big)\\
		&=0.
	\end{align*}
	Therefore, $ \sigThresh_g^*=\sigThresh_g=\mu_g+\sigma_g \sigThresh^* $. Note that
	\[
		F_g(\sigThresh_g) = F(\sigThresh^*)
	\]
	and
	\[
		F_g(\sigThresh_g-m_g) = F\Big(\sigThresh^*-r\Big)
	\]
	for both $g$. That is, FNR and FPR are the same across the groups.
	
\paragraph{Part (2)} Suppose $ \frac {m_1} {\sigma_1} < \frac {m_2} {\sigma_2} $. Then $ \overline \Delta_2 > \overline \Delta_1 $.
    \[
        (F^{cc}_2)^{-1}\circ F^{cc}_1(T_1) = \Big(\frac{T_1-\mu_1-m_1}{\sigma_1}\Big)\sigma_2+\mu_2+m_2
    \]
    and
    \[
        (F^{nc}_2)^{-1}\circ F^{nc}_1(T_1) = \Big(\frac{T_1-\mu_1}{\sigma_1}\Big)\sigma_2+\mu_2
    \]
    so that
    \begin{align*}
        &(F^{cc}_2)^{-1}\circ F^{cc}_1(T_1) \geq (F^{nc}_2)^{-1}\circ F^{nc}_1(T_1)\\
        \iff & \Big(\frac{T_1-\mu_1-m_1}{\sigma_1}\Big)\sigma_2+\mu_2+m_2 \geq \Big(\frac{T_1-\mu_1}{\sigma_1}\Big)\sigma_2+\mu_2\\
        \iff & \frac{m_2}{\sigma_2} \geq \frac{m_1}{\sigma_1}.
    \end{align*}
    Therefore, by Theorem \ref{thm:het_lem_comparison}, equalizing false positive rates and equalizing false negative rates attains strictly lower crime rates than equalizing disincentives.
\end{proof}

\thmHetLocscaleFprfnrthesame*
\begin{proof}[Proof of Theorem \ref{het_locscale_fprfnrthesame}]
    Let $ (T_1,T_2) $ be thresholds that equalize false positive rates, that is, $ 1-F\Big(\frac{T_1-\mu_1}{\sigma_1}\Big)=1-F\Big(\frac{T_2-\mu_2}{\sigma_2}\Big) $. The disincentive for group $g$ is
    \begin{equation}
        F\Big(\frac{T_g-\mu_g}{\sigma_1}\Big) - F\Big(\frac{T_g-\mu_g-m_g}{\sigma_g}\Big).
        \label{deltaeq}
    \end{equation}

    Let $ T_g' = 2\mu_g + m_g - T_g $ for each $g$. Then,

    \[
        \frac{T_g'-\mu_g}{\sigma_g} = -\frac{T_g-\mu_g-m_g}{\sigma_g}
    \]
    and
    \[
        \frac{T_g'-\mu_g-m_g}{\sigma_g} = -\frac{T_g-\mu_g}{\sigma_g}.
    \]
    Note that
    \[
        F\Big(\frac{T_1'-\mu_1-m_1}{\sigma_1}\Big) = F\Big(-\frac{T_1-\mu_1}{\sigma_1}\Big) = 1-F\Big(\frac{T_1-\mu_1}{\sigma_1}\Big) = 1-F\Big(\frac{T_2-\mu_2}{\sigma_2}\Big) = F\Big(-\frac{T_2-\mu_2}{\sigma_2}\Big) = F\Big(\frac{T_2'-\mu_2-m_2}{\sigma_2}\Big)
    \]
    where the second and the fourth equalities are from the symmetry around $ 0 $, and the third equality is from $ (T_1,T_2) $ equalizing false positive rates. Therefore, $ (T_1',T_2') $ equalize false negative rates.

    Furthermore, the disincentive under $ T_g' $ is
    \begin{align*}
        F\Big(\frac{T_g'-\mu_g}{\sigma_1}\Big) - F\Big(\frac{T_g'-\mu_g-m_g}{\sigma_g}\Big) &= F\Big(-\frac{T_g-\mu_g-m_g}{\sigma_g}\Big) - F\Big(-\frac{T_g-\mu_g}{\sigma_g}\Big)\\
        &= \Big(1-F\Big(\frac{T_g-\mu_g-m_g}{\sigma_g}\Big)\Big) - \Big(1-F\Big(\frac{T_g-\mu_g}{\sigma_g}\Big)\Big)\\
        &= F\Big(\frac{T_g-\mu_g}{\sigma_g}\Big) - F\Big(\frac{T_g-\mu_g-m_g}{\sigma_g}\Big)
    \end{align*}
    which exactly is the disincentive under $ T_g $. Therefore, for any pair of disincentives that is feasible under equalizing false positive rates, it is feasible under equalizing false negative rates.

    Similar arguments can be applied to the other case. Therefore, the set of feasible pair of disincentives are identical, and therefore, the lowest crime rate that can be attained by equalizing false positive rates and equalizing false negative rates are identical.

    \end{proof}

\thmEqBaseRatesEqIncen*
\begin{proof}
We will fix $\outOptCdf_1$, $\outOptCdf_2$, and $\overline{\margCostCrime}_1$. Then, we will consider varying $\overline{\margCostCrime}_2$. There are 4 different cases: (i) $\overline{\margCostCrime}_1 + \epsilon \le \overline{\margCostCrime}_2$, (ii) $\overline{\margCostCrime}_1 < \overline{\margCostCrime}_2 <  \overline{\margCostCrime}_1 + \epsilon$, (iii) $\overline{\margCostCrime}_1 =\overline{\margCostCrime}_2$, and (iv) $\overline{\margCostCrime}_1 > \overline{\margCostCrime}_2$, where $\epsilon > 0$ is such that $\numPeople_1\left(\outOptCdf_2(\overline{\margCostCrime}_1 + \epsilon) - \outOptCdf_1(\overline{\margCostCrime}_1)\right) + \numPeople_2 \left(\outOptCdf_2(\overline{\margCostCrime}_1 + \epsilon) - \outOptCdf_2(\overline{\margCostCrime}_1)\right)  = 0$.
Such $\epsilon$ exists by the continuity of $\outOptCdf_g$. When $\epsilon=0$, then the above value is positive and once $\epsilon$ is big enough such that $\outOptCdf_2(\overline{\margCostCrime}_1 + \epsilon') = \outOptCdf_1(\overline{\margCostCrime}_1)$, then the value is negative. Therefore, by the intermediate value theorem, such $\epsilon$ exists. Furthermore, note that for $\epsilon' > 0$ whenever $\outOptCdf_2(\overline{\margCostCrime}_1 + \epsilon') < \outOptCdf_1(\overline{\margCostCrime}_1)$, it must be that $\numPeople_2 \left(\outOptCdf_2(\overline{\margCostCrime}_1 + \epsilon) - \outOptCdf_2(\overline{\margCostCrime}_1)\right) + \numPeople_1\left(\outOptCdf_2(\overline{\margCostCrime}_1 + \epsilon) - \outOptCdf_1(\overline{\margCostCrime}_1)\right) < 0$. Therefore, we have that $\outOptCdf_2(\overline{\margCostCrime}_1 + \epsilon) \ge \outOptCdf_1(\overline{\margCostCrime}_1)$.


Also, we write $\margCostCrime_\group^{\crimeRate}$ and $\margCostCrime_\group^{\margCostCrime}$ to denote the optimal disincentives that minimize the crime rates while equalizing the crime rate and disincentive respectively.

\textbf{Case (i) $\overline{\margCostCrime}_1 + \epsilon \le \overline{\margCostCrime}_2$}\\
First, because $\overline{\margCostCrime}_1 < \overline{\margCostCrime}_2$, $\margCostCrime^\margCostCrime_1 = \margCostCrime^\margCostCrime_2 = \overline{\margCostCrime}_1$. Now, as for $\margCostCrime^\crimeRate_g$, it depends on whether $\outOptCdf_2(\overline{\margCostCrime}_2) \le \outOptCdf_1(\overline{\margCostCrime}_1)$. Consider when $\outOptCdf_2(\overline{\margCostCrime}_2) \le \outOptCdf_1(\overline{\margCostCrime}_1)$. Then, we must have $\margCostCrime^\crimeRate_1 = \overline{\margCostCrime}_1$, and $\margCostCrime^\crimeRate_2$ should be such that $\outOptCdf_2(\margCostCrime^\crimeRate_2) = \outOptCdf_1(\overline{\margCostCrime}_1)$. Because $\outOptCdf_2$ stochastically dominates $\outOptCdf_1$, we have that $\margCostCrime^\margCostCrime_1=\margCostCrime^\margCostCrime_2 < \margCostCrime^\crimeRate_2$. By the monotonicity of $\outOptCdf_2$, it must be that $\outOptCdf_2(\margCostCrime^\margCostCrime_2) > \outOptCdf_2(\margCostCrime^\crimeRate_2)$.
Therefore,
\[
\numPeople_1\outOptCdf_1(\margCostCrime^\margCostCrime_1) + \numPeople_2\outOptCdf_2(\margCostCrime^\margCostCrime_2) > \numPeople_1\outOptCdf_1(\margCostCrime^\crimeRate_1) + \numPeople_2 \outOptCdf_2(\margCostCrime^\crimeRate_2).
\]

Now, consider when $\outOptCdf_2(\overline{\margCostCrime}_2) > \outOptCdf_1(\overline{\margCostCrime}_1)$. Then, we must have $\margCostCrime^\crimeRate_2 = \overline{\margCostCrime}_2$, and $\margCostCrime^\crimeRate_1$ should be such that $\outOptCdf_1(\margCostCrime^\crimeRate_1) = \outOptCdf_2(\overline{\margCostCrime}_2)$. Compare how each group's crime rate changes as we go from equalizing crime rate to equalizing disincentive. As for group $1$, it goes from $\outOptCdf_2(\overline{\margCostCrime}_2)$ to $\outOptCdf_1(\overline{\margCostCrime}_1)$. As for group $2$, it goes from $\outOptCdf_2(\overline{\margCostCrime}_2)$ to $\outOptCdf_2(\overline{\margCostCrime}_1)$.  Therefore, total change in crime rate by going from equalizing crime rate to equalizing disincentive is at most 0:

\begin{align*}
0&= \numPeople_1\left(\outOptCdf_2(\overline{\margCostCrime}_1 + \epsilon) - \outOptCdf_1(\overline{\margCostCrime}_1)\right) + \numPeople_2 \left(\outOptCdf_2(\overline{\margCostCrime}_1 + \epsilon) - \outOptCdf_2(\overline{\margCostCrime}_1)\right)\\
&\ge \numPeople_1\left(\outOptCdf_2(\overline{\margCostCrime}_2) - \outOptCdf_1(\overline{\margCostCrime}_1)\right) + \numPeople_2 \left(\outOptCdf_2(\overline{\margCostCrime}_2) - \outOptCdf_2(\overline{\margCostCrime}_1)\right)
\end{align*}

Therefore, we have that equalizing crime rates is better than equalizing disincentives.

\textbf{Case (ii) $\overline{\margCostCrime}_1 < \overline{\margCostCrime}_2 <  \overline{\margCostCrime}_1 + \epsilon$}\\
In this case, we know that $\outOptCdf_2(\overline{\margCostCrime}_2) > \outOptCdf_1(\overline{\margCostCrime}_1)$. For the optimal disincentive-equalizing policies, we have that $\margCostCrime^\margCostCrime_1 = \margCostCrime^\margCostCrime_2 = \overline{\margCostCrime}_1$. As for crime-equalizing policy, we have $\margCostCrime^\crimeRate_2 = \overline{\margCostCrime}_2$, and $\margCostCrime^\crimeRate_1$ is chosen such that $\outOptCdf_1(\margCostCrime^\crimeRate_1) = \outOptCdf_2(\overline{\margCostCrime}_2)$. Now, compare how each group's crime rate as we goes from equalizing crime rates to equalizing disincentives. As for group $2$, it goes from $\outOptCdf_2(\overline{\margCostCrime}_2)$ to $\outOptCdf_2(\overline{\margCostCrime}_1)$. As for group $1$, it goes from $\outOptCdf_2(\overline{\margCostCrime}_2)$ to $\outOptCdf_1(\overline{\margCostCrime}_1)$. Therefore, total change in crime rate by going from equalizing crime rate to equalizing disincentive is at most 0, as

\begin{align*}
&\numPeople_2 \left(\outOptCdf_2(\overline{\margCostCrime}_2) - \outOptCdf_2(\overline{\margCostCrime}_1)\right) + \numPeople_1\left(\outOptCdf_2(\overline{\margCostCrime}_2) - \outOptCdf_1(\overline{\margCostCrime}_1)\right) \\
&> \numPeople_2 \left(\outOptCdf_2(\overline{\margCostCrime}_1 + \epsilon) - \outOptCdf_2(\overline{\margCostCrime}_1)\right) + \numPeople_1\left(\outOptCdf_2(\overline{\margCostCrime}_1 + \epsilon) - \outOptCdf_1(\overline{\margCostCrime}_1)\right) \\
&= 0
\end{align*}

Therefore, equalizing disincentives is better than equalizing crime rates in this case.

\textbf{Case (iii) $\overline{\margCostCrime}_1 =\overline{\margCostCrime}_2$}\\
First, $\margCostCrime^\margCostCrime_1 = \margCostCrime^\margCostCrime_2 = \overline{\margCostCrime}_1 = \overline{\margCostCrime}_2$.
As for equalizing crime rates $\margCostCrime^\crimeRate_g$,  $\margCostCrime^\crimeRate_2 = \overline{\margCostCrime}_2$, and $\margCostCrime^\crimeRate_1$ is chosen such that $\outOptCdf_1(\margCostCrime^\crimeRate_1) = \outOptCdf_2(\margCostCrime^\crimeRate_2) > \outOptCdf_1(\overline{\margCostCrime}_1)$.
Therefore, we have
\[
\numPeople_1\outOptCdf_1(\margCostCrime^\margCostCrime_1) + \numPeople_2\outOptCdf_2(\margCostCrime^\margCostCrime_2) < \numPeople_1\outOptCdf_1(\margCostCrime^\crimeRate_1) + \numPeople_2 \outOptCdf_2(\margCostCrime^\crimeRate_2),
\]
meaning equalizing disincentives is better than equalizing crime rates.

\textbf{Case (iv) $\overline{\margCostCrime}_1 > \overline{\margCostCrime}_2$}\\
First, $\margCostCrime^\margCostCrime_1 = \margCostCrime^\margCostCrime_2 = \overline{\margCostCrime}_2$.
As for equalizing crime rates $\margCostCrime^\crimeRate_g$,  $\margCostCrime^\crimeRate_2 = \overline{\margCostCrime}_2$, and $\margCostCrime^\crimeRate_1$ is chosen such that $\outOptCdf_1(\margCostCrime^\crimeRate_1) = \outOptCdf_2(\margCostCrime^\crimeRate_2) > \outOptCdf_1(\overline{\margCostCrime}_1)$.
Therefore, we have
\[
\numPeople_1\outOptCdf_1(\margCostCrime^\margCostCrime_1) + \numPeople_2\outOptCdf_2(\margCostCrime^\margCostCrime_2) < \numPeople_1\outOptCdf_1(\margCostCrime^\crimeRate_1) + \numPeople_2 \outOptCdf_2(\margCostCrime^\crimeRate_2),
\]
meaning equalizing disincentives is better than equalizing crime rates.
\end{proof}

\eqBaseRatesOpt*
\begin{proof}
$(\overline{\margCostCrime}_1,\overline{\margCostCrime}_2)$ is the most optimal policy, and they equalize the crime rates $\outOptCdf_1( \overline{\margCostCrime}_1) = \outOptCdf_2(  \overline{\margCostCrime}_2)$. However, it is not guaranteed that the false positive/negative rates or disincentives will be equalized. For instance, if $\overline{\margCostCrime}_1 \neq \overline{\margCostCrime}_2$, then the disincentives are not equalized. And as $\margCostCrime_g = (1-\FNR_g) - \FPR_g$, false positive/negative rates won't be equalized in general.
\end{proof}

\end{document}

\cj{Haven't changed the notations here to use the macros}
It is informative to understand this conclusion in terms of the incentives for agents. Consider an agent from group $g$ with signal $ s $ and outside option $w$. The adjudicator's strategy can be represented as a function $\beta_g(s)$ of the observed signal $s$ and group that assigns a probability to being labeled guilty. Conditioning on the crime status $ c\in\{C,NC\} $, the probability that the agent is classified as guilty under the adjudicator's strategy is
\[
	\Pr(G\mid c)=\int_{\mathcal{S}}f(s\mid c)\beta_g(s)ds.
\]
Therefore, the agent commits a crime if and only if
\[
	C-Q(\Pr(G\mid C) - \Pr(G\mid NC))\geq w.
\]
Define the disincentive of crime to be $ \Delta_g=\Pr(G\mid C) - \Pr(G\mid NC) $. The average number of crimes is minimized by minimizing the incentive to commit a crime, $ C-Q\Delta_g $, which is attained by maximizing the disincentive of crime $ \Delta_g $. The disincentive of crime  is
\[
	\int_{\mathcal{S}} (f(s\mid C)-f(s\mid NC))\beta_g(s)ds.
\]
This is maximized if
\[
	\beta_g(s)=\begin{cases}
		1\text{ if } f(s\mid C)\geq f(s\mid NC)\\
		0\text{ if } f(s\mid C)< f(s\mid NC)
	\end{cases}
\]
Note that the guilt probability $ \beta_g(s) $ is independent of group membership $g$. As long as the signal structure is the same for both groups, the adjudicator prescribes the same judgement  for the same signal $ s $. This implies that $ FNR $ and $ FPR $ are equalized for both groups.

 A guilty probability policy $ \beta_g(s) $ is optimal iff $ \beta_g(s)\in \arg\max \Delta_g\ \forall i $.
 If $ F_1=F_2=F $ for some $ F $, then there is an optimal guilty probability policy that induces the same FPR and FNR across the groups.

